\newcommand{\Real}{\mathbf{R}}
\DeclareMathOperator{\rank}{rank}
\newtheorem{lemma}{Lemma}
\newtheorem{theorem}{Theorem}
\newtheorem{corollary}{Corollary}
\title{Windows into Relational Events:\\ Data Structures for\\ Contiguous Subsequences of Edges}
\author[1]{Michael J. Bannister}
\author[2]{Christopher DuBois}
\author[1]{David Eppstein}
\author[1]{Padhraic Smyth}
\affil[1]{Department of Computer Science, University of California, Irvine}
\affil[2]{Department of Statistics, University of California, Irvine}
\begin{document}
\maketitle
\begin{abstract}
We consider the problem of analyzing social network data sets in which the edges of the network have timestamps, and we wish to analyze the subgraphs formed from edges in contiguous subintervals of these timestamps. We provide data structures for these problems that use near-linear preprocessing time, linear space, and sublogarithmic query time to handle queries that ask for the number of connected components, number of components that contain cycles, number of vertices whose degree equals or is at most some predetermined value, number of vertices that can be reached from a starting set of vertices by time-increasing paths, and related queries.
\end{abstract}

\thispagestyle{empty}
\newpage
\setcounter{page}{1}
\pagestyle{plain}

\section{Introduction}
The study of algorithms for social network analysis has so far been concentrated primarily on computations involving graphs that are relatively static: either a fixed graph is given as input to algorithms for problems such as the computation of centrality~\cite{EppWan-SODA-01,Bra-JMS-01,OkaCheLi-FAW-08,Kin-08}, or the input graph is assumed to change gradually by insertions and deletions of vertices and edges, and these changes can be handled efficiently by dynamic graph algorithms~\cite{EppGalIta-ATCH-99,EppSpi-WADS-09,EppGooStr-COCOA-10}. These input models work well for networks that describe long-term ties such as friendship or supervisorial relations between people, and they often match the information provided by online service providers such as Facebook. However, there is a second type of social network data set, known variously as \emph{relational event data}~\cite{But-SM-08}, \emph{dyadic event data}~\cite{BraLerSni-ASONAM-09}, \emph{longitudinal network data}~\cite{GolZheFie-FTML-09,Noo-SN-11},  \emph{contact sequences}~\cite{Hol-PRE-05}, or \emph{time-ordered networks}~\cite{Moo-SF-02}, on which we would also like to perform efficient computations. This type of data models communication events between pairs of people rather than long-term ties; for instance, each datum in a data set might consist of the identities of the sender and recipient of a single email message along with its timestamp.\footnote{For example, Kossinets and Watts describe an email data set of this type with approximately 7 million email messages, sent by approximately 30,000 people within a single university campus over the course of a year~\cite{KosWat-AJS-09}.} At any instant of sampled time there is no graph, only a single edge.

It is only by grouping together multiple events over sliding windows of time that we can form a network from this type of relational event data~\cite{CorPreVol-JCGS-03,MooMcFBen-AJS-05,KosWat-AJS-09}. If a fixed window size is chosen, the sequence of time windows can be modeled by the insertion of an edge when it enters the window, and deletion when it leaves the window; these types of change are familiar in the analysis of algorithms as the basis for many dynamic graph algorithms~\cite{EppGalIta-ATCH-99}. However, the dynamic graph model is too restrictive for our purposes, because it may not be obvious what length of time window to use. Long windows present a relatively static view of the data, aggregating it into a single graph but losing all dynamic time information. Short windows capture the dynamics better but may be too sparse to see the entire pattern of connections; indeed, for very short windows, the subset of data within any window may have few or no vertices with degree higher than one. Thus, it may be useful to perform exploratory data analysis by testing different sizes of window to find the one that best balances connectivity with dynamics, or to study the same data set with multiple window sizes to show how its behavior varies with the time scale.

In this paper we provide for the first time an algorithmic model for the analysis of relational event data with windows chosen dynamically rather than a priori, and we also develop fundamental data structures that can perform this analysis efficiently. In our model, the input to a relational event data analysis problem is formulated as a sequence of (directed or undirected) edges, and we define a \emph{slice} of the data to be the graph formed by a contiguous subsequence of the input. The data structures that we describe  represent the entire relational data set using only linear space, can be constructed in near-linear time, and support queries that ask for statistical information about arbitrary slices of the data in sublogarithmic time per query. The graph properties that our data structures can handle include many quantities already studied (for static networks) by social networking researchers, including the following:
\begin{itemize}
\item We can count the number of connected components of a slice, the number of nontrivial components (with more than one vertex), the number of \emph{loopy components} (components that contain at least one cycle), the average size of a connected component,  the average size of a nontrivial connected component, and the number of \emph{loopy edges} (edges that close a cycle).  Such queries concern the connectivity structure of the network, providing insight into diffusion processes and the robustness of these processes to intervention.  In sexual networks, for example, a scarcity of short cycles implies the absence of a dense core, a fact with implications for the study and treatment of HIV \cite{BeaMooSto-AJS-04, PotPhiPlu-STI-02} and Gonorrhoea~\cite{DeSinWon-STI-04, PotMutRot-STI-02}. In such networks loopy edges represent reinfection events, which fuel the growth phase of an outbreak~\cite{PotMutRot-STI-02}.
\item We can count the number of isolated vertices in a slice, the number of isolated edges, the number of edges that have $d$ neighboring edges for any constant~$d$, and the number of vertices that have exactly or at most $d$ neighboring edges or vertices for any predetermined (but not necessarily constant)~$d$. These parameters provide access to the \emph{degree distribution} of a network, which has long been recognized as important in social network analysis; for instance, Seidman~\cite{Sei-SN-83} emphasizes the importance of distinguishing between networks with uniform degrees from networks in which there are a few high degree vertices and many low degree vertices.
\item We can count the number of repeated edges within a slice, the number of edges with multiplicity exactly or at most~$\mu$, and the \emph{reciprocity} (relative proportion of pairs of vertices connected by directed edges in both directions to pairs connected only in one direction). In social networks where directed edges represent communication, reciprocity has been used to quantify the amount of social interaction versus broadcast communication~\cite{Gong11, Diego04}.
\item We can count the number of vertices that can be reached from a predetermined set of starting vertices, via (directed or undirected) paths in which the timestamps of the edges are monotonically increasing. These paths have also been called \emph{journeys}~\cite{BuiFerJar-WiOpt-03} or \emph{diffusion paths}~\cite{Moo-SF-02}; they are the possible transmission routes of information or contagion through the network, and the number of reachable nodes has been studied (for the aggregate graph rather than for slices) by Holme~\cite{Hol-PRE-05}. We can also count the number of vertices that can be reached by paths of this type with bounded hop count.
\item We can count the number of \emph{triad closure events} in which an edge belongs to at least one triangle formed by it and earlier edges within the slice. Triads and related transitivity properties such as the clustering coefficient have long been recognized as important in the structure of social networks~\cite{Gra-AJS-73,Rap-BMB-53} and the number of triad closure events, like monotonic reachability,  also incorporates time-dependence in its definition. Our data structure for counting them involves somewhat slower preprocessing, comparable to the time to find or count triangles in a static graph.
\end{itemize}

\begin{wrapfigure}[14]{r}{0.28\textwidth}
\vspace{-1.5em}
\centering
\includegraphics[width=0.28\textwidth]{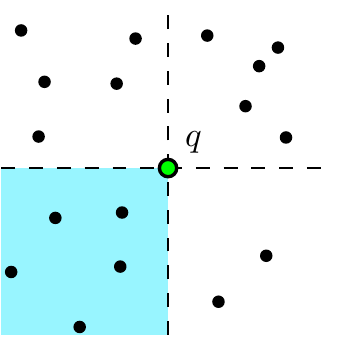}
\vspace{-2em}
\caption{Example dominance query where the points in the shaded region are counted.}
\label{fig:dom-q}
\vspace{-0em}
\end{wrapfigure}
We show how to reduce each of these problems to two-dimensional \emph{dominance counting} queries on point sets derived from the input network, with $O(1)$ points per network edge. In the variant of two-dimensional dominance counting that we use, a data set consists of a set of $n$ two-dimensional points with integer coordinates $(x,y)$ satisfying $1\le y\le x\le n$, and each query must determine the number of points dominated by a query point $q = (x_0, y_0)$, i.e., the number of points $(x, y)$ with $x \leq x_0$ and $y \leq y_0$. Through a slight abuse of terminology we will use the term dominance counting when counting the number of points contained in any one of the four quadrants created by placing the query point $q$ in the plane.  As we describe, many of the problems listed above can be reduced to dominance counting through a common unification in terms of the rank function of matroids,  generalizing a data transformation for one-dimensional colored range counting given by Gupta et al.~\cite{ColorRangeQueries}.
For the remaining problems (including isolated edges and monotonic reachability) our reduction instead passes through another two-dimensional range searching problem, rectangle stabbing. Combining our reductions with known dominance counting data structures~\cite{JaJMorShi-ISAAC-04} would allow us to solve these problems in linear space and query time $O(\log m/\log\log m)$, where $m$ is the number of edges in the input network; in an appendix we provide a refined dominance counting structure that improves the query time to $O(\log w/\log\log m)$ where $w$ is the number of edges in the queried slice. We also outline an alternative solution based on path-copying persistence~\cite{DSPersistent} and balanced binary trees that we expect to be more suitable for implementation; it uses $O(m\log m)$ space and gives query time $O(\log w)$.

In another appendix, we adapt a lower bound of Mihai P{\v a}tra{\c s}cu for two-dimensional range counting~\cite{Pat-STOC-07} to the problems studied here, using reductions in the other direction from point sets to networks. We show that, in the cell probe model, with query time measured as a function only of the input size (rather than of the window size, as in our upper bounds) all data structures that use space $O(m\log^{O(1)} m)$ must take $\Omega(\log m/\log\log m)$ query time to solve many of the queries considered here.

\section{Problem formulation}
Define a \emph{relational event graph} $G$ to be a fixed set of vertices $V$ together with a sequence of edges (or relational events) $E = \{e_k \mid 0 \leq k < m\}$ between pairs of vertices. The graph is undirected if the pairs are unordered, and directed if the pairs are ordered. The pairs in the sequence are not required to be distinct from each other.
Given a relational event graph $G$ we define the \emph{slice multigraph} $G_{i,j}$ to be the multigraph with vertices $V$ and edges $\{ e_k\mid i\le k\le j\}$.

We assume that the entire relational event graph $G$ is given to us as input.  Our task is to construct a data structure from $G$ that will allow us to compute the properties of its slices $G_{i,j}$, for a query pair of indices $i,j$. To avoid trivial solutions, queries in such a data structure should take less time than the $\Omega(j-i)$ of an algorithm that constructs the slice multigraph and applies a static graph algorithm to it, and the data structure should use less space than the $\Omega(m^2)$ of an algorithm that precomputes and stores the answers to all possible queries.

\section{Matroid rank}
We will turn many of our queries into a matroid rank problem. Recall~\cite{Lawler01, Welsh10} that a \emph{matroid} over a set $S$ is a collection of subsets of $S$ called \emph{independent sets} 
obeying the following three properties:
\begin{itemize}
\item The empty set is independent.
\item Every subset of an independent set is independent.
\item If $U$ and $V$ are independent sets and $U$ is larger than~$V$, then there exists $u\in U$ such that $V \cup \{u\}$ is independent.
\end{itemize}
The \emph{rank} of a set $E \subseteq S$ is the size of the largest independent subset of $E$, and a \emph{circuit} is a minimal dependent subset (i.e., a set whose proper subsets are all independent).

We will define matroids over sequences $E=\{e_k \mid 0 \leq k < m\}$ of elements (usually the edges of our input graph); we form slices $E_{i,j} = \{e_k\mid i\le k\le j\}$ from contiguous subsequences of this sequence. The rank of $E_{i,j}$ will then be useful for computing the numerical graph quantities we wish to compute; for instance, we will use the graphic matroid (whose rank is the number of edges in a spanning forest) to determine the numbers of connected components and loopy edges in a slice. In order to calculate the rank of a slice, we define the \emph{independence time} $\tau(e_k)$ for an element $e_k$ to be the smallest index $i$ such that $\rank(E_{i,k})>\rank(E_{i,k-1})$, or $-1$ if the inequality holds for all $i$. A straightforward induction on $j-i$ shows that the rank of every slice $E_{i,j}$ equals the number of matroid elements in the slice whose independence is before the interior of the slice.

Our meta-algorithm for precomputing $\tau$ (needing details to be filled in for  specific matroids) assigns a weight to each element, equal to its index. It then incrementally considers the elements in sequence order, maintaining as it does a maximum-weight basis of the set of elements considered so far. When adding element $e_k$ to the basis would cause it to remain independent, the augmented set becomes the new basis and in this case we set $\tau(e_k)=-1$. However, when the previous basis and the new element together contain a circuit (necessarily a unique circuit), we form the new basis by removing the lightest element from this circuit, and adding $e_k$ in its place; in this case, $\tau(e_k)$ is one more than the index of the removed element. Later, when we discuss specific matroids, we will describe how to quickly identify the circuit containing the new element and the lightest element of this circuit.

\begin{lemma}\label{lem:general-matroid}
The elements of $E$ can be mapped to points in $\Real^2$ such that the rank of $E_{i,j}$ can be determined by a dominance counting query. The time needed for this mapping is the same as the time to compute $\tau$ for all elements in $E$.
\end{lemma}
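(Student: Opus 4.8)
The plan is to turn the rank identity we already have into a geometric counting problem by choosing the right point set. Recall that the induction quoted above gives, for every query pair $(i,j)$,
\[
\rank(E_{i,j}) = \#\{\, e_k : i \le k \le j \text{ and } \tau(e_k) \le i \,\},
\]
so it suffices to realize this count by a dominance query. First I would assign to each element $e_k$ that is not a matroid loop the point $p_k = (k, \tau(e_k))$; loops never raise the rank of any slice, so we omit them (equivalently we may treat their independence time as $+\infty$). Two facts make these points fit the dominance framework $1 \le y \le x \le n$ of the introduction: the independence time never exceeds the element's own index, $\tau(e_k) \le k$, because the singleton slice $E_{k,k}$ already has rank one; and $\tau(e_k) \ge -1$, so a global shift of the coordinates by a constant places every $p_k$ on or below the diagonal with positive integer coordinates.

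The heart of the argument is that the three constraints $i \le k$, $k \le j$, and $\tau(e_k) \le i$ describe a three-sided region and so are not directly a dominance query, yet they collapse to two constraints once we exploit the diagonal relation $\tau(e_k) \le k$. I would carry this out by counting the complementary set. An element of the slice fails to contribute exactly when $i \le k \le j$ and $\tau(e_k) > i$; but if $y_k = \tau(e_k) > i$ then $x_k = k \ge y_k > i$, so the lower bound $k \ge i$ is automatic. Hence the non-contributing elements of the slice are precisely the points with $x_k \le j$ and $y_k > i$, a single quadrant (a dominance count) with corner at $(j,i)$. Subtracting this from the number of non-loop elements whose index lies in $[i,j]$ — a quantity read off in $O(1)$ time from a prefix-count array, and equal to $j-i+1$ when the matroid has no loops — yields $\rank(E_{i,j})$ exactly.

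For the time bound, once every $\tau(e_k)$ is known the points $p_k$ and the prefix-count array are produced by a single linear scan, so the cost of the mapping is dominated by, and hence asymptotically equal to, the cost of computing $\tau$ for all elements, as claimed; the dominance data structure itself is supplied by the cited results and lies outside this bound.

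I expect the main obstacle to be the reduction in the second paragraph: verifying that the diagonal relation $\tau(e_k) \le k$ genuinely removes the left boundary, so that a bona fide two-sided dominance query rather than a three-sided one computes the rank, and checking the boundary conventions at $x_k = i$ together with the $-1$ and loop sentinels, so that the quadrant count is exact rather than off by the number of early or degenerate elements.
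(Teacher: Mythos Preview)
Your proposal is correct and follows essentially the same route as the paper: map $e_k$ to $(k,\tau(e_k))$, observe that the three-sided constraint collapses to a two-sided dominance query because $\tau(e_k)\le k$, and recover the rank by taking the complement within the slice. Your treatment is in fact a bit more careful than the paper's terse version---you handle matroid loops explicitly and spell out the boundary conventions---but the idea, the point set, and the reduction are identical.
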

\begin{proof}
We map each $e_k$ to $(k, \tau(e_k))$. To determine the rank of $E_{i,j}$ we count the number of elements whose independence time is in the slice's range of indices, i.e.,  $i \leq k \leq j$ and $i \leq \tau(e_k)$. This three sided query can be reduced to the dominance counting query $k \leq j$ and $i \leq \tau(e_k)$, as it is not possible to have $k < \tau(e_k)$. We then take the complement to count the edges whose independence time is before the interval.
\end{proof}

\section{Counting vertices by degree and edges by multiplicity}
In this section we use partition matroids (a standard type of matroid, defined below) to determine the number of vertices of bounded degree, the number of vertices of a specific degree, the number of edges of bounded multiplicity, the number of edges of a given multiplicity, and the reciprocity of a slice in a relational event graph. Our techniques can also solve the colored range counting problem considered in~\cite{ColorRangeQueries}, using colors to define the partition, and our Lemma~\ref{lem:vertex-degree} generalizes their data transformation approach for colored range counting.

In general, a partition matroid is defined over a set $S$ that has been partitioned into a family of disjoint subsets $P_i$ for $1 \leq i \leq p$, each of which is associated with a numeric parameter $k_i$. A subset $A$ of $S$ is defined to be independent in the matroid if $A \cap P_i$ has at most $k_i$ elements for each $1 \leq i \leq p$. Each circuit of this matroid is a subset of exactly $k_i+1$ elements of one of the sets $P_i$. It is straightforward to verify that the independence system defined in this way satisfies the axioms of a matroid~\cite{Welsh10}.

Given a relational event graph $G$ with edge sequence $E = \{ e_i \}$ and a parameter $k$ we consider a partition matroid whose elements are the set of half-edges $(u_i, 2i)$ and $(v_i, 2i + 1)$ for each edge $e_i = \{u_i, v_i\}$ in $E$. In this matroid we define a set $S$ to be independent if each vertex of $G$ appears at most $k$ times in the first component of a half-edge, where $k$ is a fixed parameter. That is, there is one set $P_u$ for each vertex $u$, containing all the half-edges $(u,v)$, and the corresponding partition matroid parameter is $k_u=k$. The rank of this partition matroid is given by $\rank_k(S) = \sum_{v \in V} \max(\deg(v), k).$

\begin{lemma}\label{lem:vertex-degree}
We can compute $\tau$ for the partition matroid described above in linear time and space.
\end{lemma}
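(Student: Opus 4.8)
The plan is to specialize the meta-algorithm for computing $\tau$ to the partition matroid at hand, and to show that for this matroid the greedy basis maintenance decouples into a collection of independent first-in-first-out queues, one per vertex. Recall that every circuit of this partition matroid consists of exactly $k+1$ half-edges sharing the same first component $u$; equivalently, every circuit lies entirely within a single block $P_u$ of the partition. Because the meta-algorithm assigns each half-edge a weight equal to its index and processes half-edges in increasing index order, the half-edge currently being inserted is always the heaviest element seen so far. Hence, whenever inserting it creates a circuit, that circuit is formed by the $k$ half-edges of $P_u$ presently in the basis together with the new half-edge.

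The key structural observation is that, since independence is tested blockwise, each vertex $u$ can be handled in complete isolation from every other block. First I would maintain, for each vertex $u$, a FIFO queue holding the indices of the half-edges of $P_u$ currently in the maximum-weight basis. When a half-edge $(u,\cdot)$ is processed: if the queue for $u$ holds fewer than $k$ elements, the half-edge is independent of the current basis, so we enqueue its index and set its $\tau$ value to $-1$; otherwise the queue holds exactly $k$ elements, and inserting the new half-edge creates the unique circuit, whose lightest member is the smallest index among the $k+1$ involved. Since we only ever enqueue indices in increasing order and only ever dequeue from the front, each queue is always sorted from front to back, so its front is precisely this lightest element. We therefore dequeue the front, enqueue the new index, and set $\tau$ of the new half-edge to one more than the dequeued index, exactly as the meta-algorithm prescribes.

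For the complexity, each of the $2m$ half-edges triggers $O(1)$ queue operations, so the computation of $\tau$ runs in $O(m)$ time after an $O(n)$-time initialization of the per-vertex queues, giving $O(m+n)$ time overall. The total space is bounded by the number of indices stored across all queues, which never exceeds the number of half-edges, plus one queue header per vertex, again $O(m+n)$. I do not expect any serious obstacle here: the only point requiring care is verifying that the front of each queue always coincides with the lightest element of the circuit, which follows immediately from processing half-edges in index order. Indeed the whole procedure can be read off as a per-vertex sliding-window computation, in which the $j$th half-edge with first component $u$ receives $\tau=-1$ when $j\le k$ and $\tau$ equal to one plus the index of the $(j-k)$th such half-edge when $j>k$.
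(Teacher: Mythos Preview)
Your proposal is correct and follows essentially the same approach as the paper: process half-edges in index order, maintain a $k$-element FIFO queue per vertex, and when a queue is full set $\tau$ of the incoming half-edge to one more than the index of the dequeued front element. Your write-up is in fact more explicit than the paper's in justifying why the queue front is the lightest circuit element and in tying the procedure back to the meta-algorithm.
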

\begin{proof}
We process the half-edges in index order, adding them to a dictionary that associates each vertex with a $k$-element queue of insertion times. When inserting a new half-edge, if its endpoint's queue is full, we dequeue the top half-edge and record one more than its index for $\tau$ of the current half-edge; otherwise we record $-1$ for $\tau$ of the edge. Then, regardless of whether the queue was full, we add the half-edge to the queue. By storing the queues as linked lists this can be done in linear space and time.
\end{proof}

\begin{theorem}\label{thm:vertex-degree}
Given a relational event graph $G$ the problems of determining the number of isolated vertices, vertices of a given degree~$d$, and vertices of bounded degree in $G_{i,j}$ can be reduced to dominance counting in linear time and space.
\end{theorem}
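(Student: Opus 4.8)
The plan is to express each of the three quantities as a fixed linear combination of partition-matroid ranks, each of which reduces to a dominance counting query by Lemmas~\ref{lem:vertex-degree} and~\ref{lem:general-matroid}. Write $\rank_k(G_{i,j}) = \sum_{v\in V}\min(\deg_{i,j}(v),k)$ for the rank of the slice in the degree-$k$ partition matroid, where $\deg_{i,j}(v)$ denotes the number of edges of the slice incident to $v$. The identity I would build everything on is a telescoping of this rank: for each $k\ge 1$,
\[
\rank_k(G_{i,j}) - \rank_{k-1}(G_{i,j}) = \bigl|\{v : \deg_{i,j}(v)\ge k\}\bigr|,
\]
since a vertex contributes exactly $1$ to the difference when its slice degree is at least $k$ and $0$ otherwise.

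From this identity every desired count follows. Using $\rank_0\equiv 0$, the number of non-isolated vertices is $\rank_1(G_{i,j})$, so the number of isolated vertices is $|V|-\rank_1(G_{i,j})$. The number of vertices of slice-degree greater than $d$ is $\rank_{d+1}(G_{i,j})-\rank_d(G_{i,j})$, so the number of vertices of degree at most $d$ (counting isolated vertices) is $|V|-\rank_{d+1}(G_{i,j})+\rank_d(G_{i,j})$. Finally, the number of vertices of degree exactly $d$ is the difference of two ``at least'' counts, namely $\bigl(\rank_d-\rank_{d-1}\bigr)-\bigl(\rank_{d+1}-\rank_d\bigr)=2\rank_d(G_{i,j})-\rank_{d-1}(G_{i,j})-\rank_{d+1}(G_{i,j})$ for $d\ge 1$, and $|V|-\rank_1(G_{i,j})$ for $d=0$.

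Each of these counts involves ranks for at most three values of $k$, namely $k\in\{d-1,d,d+1\}$, all fixed once the query parameter $d$ is fixed. So first I would invoke Lemma~\ref{lem:vertex-degree} once for each of these $O(1)$ matroids to compute $\tau$ in linear time and space, and then apply Lemma~\ref{lem:general-matroid} to obtain, for each $k$, a point set on which $\rank_k(G_{i,j})$ is a single dominance counting query. Since a slice $G_{i,j}$ of edges corresponds to the contiguous range of half-edge indices $[2i,\,2j+1]$, each edge-index query translates directly into the appropriate half-edge dominance query. Building only a constant number of point sets, each by a linear-time procedure, yields linear total preprocessing time and space, with each degree query answered by $O(1)$ dominance counting queries.

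The main obstacle here is bookkeeping rather than depth: one must line up the telescoping of $\min(\deg_{i,j}(v),k)$ correctly, and remember that the total vertex count $|V|$ (available from the input) is what converts the ``at least $k$'' counts produced by rank differences into the isolated-vertex and ``at most $d$'' counts, both of which must include vertices of slice-degree zero. Once the identity above is in place, the rest is an immediate application of the two lemmas.
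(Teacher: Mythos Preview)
Your proposal is correct and follows essentially the same approach as the paper: compute the number of vertices of degree greater than $d$ as the telescoping difference $\rank_{d+1}-\rank_d$ of partition-matroid ranks, reduce each rank to a dominance query via Lemmas~\ref{lem:vertex-degree} and~\ref{lem:general-matroid}, and derive the remaining counts by inclusion--exclusion. You are in fact more explicit than the paper (writing out all the formulas, handling the half-edge index translation, and using the correct $\min$ in the rank formula where the paper has a typographical $\max$).
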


\begin{proof}
We use our solution to the matroid rank problem (Lemma~\ref{lem:general-matroid}) to create two data structures for the partition matroid with $k = d$ and with $k = d+1$. Then by performing two dominance counting queries we can compute
\[
\rank_{d+1}(G_{i,j}) - \rank_{d}(G_{i,j}) = \sum_v \max(\deg(v), d+1) - \max(\deg(v), d)
\]
which is equal to the number of vertices of degree greater than $d$. With the ability to count the number of vertices of degree greater than $d$ we can easily compute the queries stated in the theorem by inclusion-exclusion.
\end{proof}

We define the \emph{multiplicity} of an edge in a directed or undirected relational event graph to be the number of other edges that have the same two endpoints (as an ordered or unordered pair, respectively).
We can count the distinct edges, the edges that have a given multiplicity, or the edges that have bounded multiplicity, using a partition matroid whose elements are the edges, and whose partitions group together edges that have the same ordered or unordered pair of endpoints.

\begin{theorem}
Given a (directed) relational event graph $G$ the problems of determining the number of distinct directed or undirected edges, edges of bounded multiplicity, reciprocated edges, and the reciprocity in  $G_{i,j}$ can be reduced to dominance counting in linear time and space.
\end{theorem}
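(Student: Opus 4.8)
The plan is to mirror the development of Theorem~\ref{thm:vertex-degree}, replacing the half-edge matroid by the edge partition matroid just described. First I would fix a parameter $k$ and form the partition matroid whose ground set is the edge sequence, with one part $P$ for each distinct endpoint pair (ordered in the directed case, unordered in the undirected case) and parameter $k_P=k$. Writing $c_P$ for the number of slice edges falling in part $P$, the rank of a slice is $\rank_k(G_{i,j})=\sum_P\min(c_P,k)$. The first step is to establish the analog of Lemma~\ref{lem:vertex-degree}: process the edges in index order while maintaining, in a dictionary keyed by endpoint pair, a $k$-element queue of recent insertion times; when a part's queue is already full we pop its oldest member and set $\tau$ of the new edge to one more than that index, and otherwise we set $\tau=-1$. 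This computes $\tau$ for all edges in linear time and space, and Lemma~\ref{lem:general-matroid} then turns each value $\rank_k(G_{i,j})$ into a single dominance counting query.

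Next I would express each requested quantity as an $O(1)$-size combination of such ranks. The number of distinct edges is immediate, since $\rank_1(G_{i,j})$ counts the nonempty parts. For multiplicities I would introduce $N_t$, the number of parts with exactly $t$ slice edges, and combine $\rank_{k+1}(G_{i,j})-\rank_k(G_{i,j})=|\{P:c_P>k\}|$ with the decomposition $\rank_k(G_{i,j})=\sum_{t\le k} tN_t+k\,|\{P:c_P>k\}|$ to eliminate the tail count, obtaining
\[
\sum_{t\le k} t\,N_t \;=\; (k+1)\,\rank_k(G_{i,j}) \;-\; k\,\rank_{k+1}(G_{i,j}).
\]
Since an edge lying in a part of size $c_P$ has multiplicity $c_P-1$, the left side with $k=\mu+1$ is exactly the number of edges of multiplicity at most $\mu$, computed by two dominance queries; the count of edges of multiplicity exactly $\mu$ follows by one subtraction, and the number of repeated edges is the slice size minus $\rank_1(G_{i,j})$.

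For reciprocity I would build two of these structures at $k=1$: the directed one, giving $D=\rank_1(G_{i,j})$ distinct ordered edges, and the undirected one, giving $U=\rank_1(G_{i,j})$ distinct unordered edges. A dyad present in both directions contributes $2$ to $D$ and $1$ to $U$, while a one-directional dyad contributes $1$ to each, so the number of reciprocated dyads is $D-U$ and the number of one-directional dyads is $2U-D$; the reciprocity is their ratio, computable in constant time from the four dominance queries. All of these structures are constructed in linear time and space, as required.

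The step I expect to be the main obstacle is the multiplicity-weighted counting. In Theorem~\ref{thm:vertex-degree} each part contributes a single vertex, so a single consecutive rank difference suffices; here the target is a size-weighted sum over parts, which is why the refined identity above, with its $k$-dependent coefficients, is needed rather than a plain difference. A related care point is the precise reading of ``reciprocated edges'': counting reciprocated \emph{dyads} is clean via $D-U$, but counting reciprocated edge \emph{events} couples the two directional parts of a single dyad, a condition that a single partition-matroid rank cannot express directly, so I would treat that variant separately and justify any event-weighted version on its own terms.
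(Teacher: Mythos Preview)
Your proposal is correct and follows essentially the same route as the paper: the edge partition matroid with parameter~$k$, the queue-based computation of~$\tau$, the $\rank_1$ count of distinct edges, and the $D-U$ argument for reciprocated pairs all match the paper's proof. Your derivation of the event-weighted identity $\sum_{t\le k} tN_t=(k{+}1)\rank_k-k\,\rank_{k+1}$ is a mild extension---the paper, paralleling Theorem~\ref{thm:vertex-degree}, only extracts the number of \emph{distinct} edge-types of bounded multiplicity via consecutive rank differences---but both readings reduce to the same dominance queries, so there is no gap.
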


\begin{proof}
The number of edges with given multiplicity can be counted using ranks in a partition matroid, as in Theorem~\ref{thm:vertex-degree}.

To determine the number of reciprocated edges, we count the number of distinct edges in two different ways, interpreting the same graph once as a directed graph and a second time as an undirected graph. Reciprocated edges are counted twice as distinct directed edges but only once as undirected, and unreciprocated edges are counted once either way, so the number of reciprocated edges is the difference between the numbers of distinct directed and undirected edges. The reciprocity is then the ratio of reciprocated edges to all edges.
\end{proof}

\section{Counting connected components}
To count the number of connected components in a graph we use the \emph{graphic matroid}~\cite{Welsh10}, another standard type of matroid. The graphic matroid of an undirected graph $G$ has the edges of $G$ as its elements; a set of edges is independent in the graphic matroid if it forms a forest. A circuit in the graphic matroid is a simple cycle in $G$, and the rank of the graphic matroid on a set of edges is the number of vertices minus the number of connected components.

To count connected components that contain cycles we use the \emph{bicycle matroid} (or \emph{bicircular matroid})~\cite{Matthews77}, a somewhat less-well-known matroid that also has the edges of a graph as its elements. In the bicycle matroid, a set of edges is independent if it forms a \emph{pseudoforest}, a graph that has at most one cycle per connected component or equivalently a graph in which each subgraph has at most as many edges as vertices. The rank of the bicycle matroid on a set of edges is the number of vertices minus the number of tree components.

To precompute $\tau$ for both of these matroids, we use \emph{linking and cutting trees}~\cite{LinkCutArt, LinkCutBook}. These are data structures that may be used to represent a rooted forest, subject to updates that either insert edges (if the result of the insertion would still be a forest) or delete them. Cutting and linking trees also allow operations to look up the root of the tree containing a query vertex or the lightest edge on any path. Both updates and queries take logarithmic time per operation, and the overall data structure uses linear space.

\begin{lemma}\label{lem:connected-component}
We can precompute $\tau$ for the graphic matroid in $O(m\log n)$ time and linear space.
\end{lemma}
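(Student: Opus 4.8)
The plan is to instantiate the meta-algorithm for computing $\tau$ using the graphic matroid, in which a maximum-weight basis is simply a maximum-weight spanning forest and the weight of each edge is its index. Since we process the edges in index order, the current edge $e_k$ always has strictly larger weight than every edge already in the forest; consequently $e_k$ can never be the lightest element of a circuit it helps to create, and the lightest circuit element always lies on the existing tree path. I would maintain the spanning forest in a linking-and-cutting-tree data structure, augmented so that each tree edge carries its index as a weight and so that the structure supports a ``lightest edge on a path'' query.

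First I would handle the insertion of $e_k = \{u, v\}$ by testing whether $u$ and $v$ already lie in the same tree, using the find-root operation. If they lie in different trees, adding $e_k$ keeps the forest acyclic, so I link $u$ and $v$ by $e_k$ and set $\tau(e_k) = -1$. If instead $u$ and $v$ lie in the same tree, then the unique circuit is $e_k$ together with the tree path from $u$ to $v$; I query the lightest edge on this path, obtaining some edge of index $l$, cut that edge, link $e_k$ in its place, and set $\tau(e_k) = l + 1$. Because $e_k$ is the heaviest edge present, swapping out the lighter path edge for $e_k$ indeed preserves a maximum-weight spanning forest, as the meta-algorithm requires. Each insertion thus performs a constant number of linking-and-cutting-tree operations.

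The complexity bound then follows immediately: there are $m$ insertions, each costing $O(1)$ operations at $O(\log n)$ time apiece, for $O(m\log n)$ time in total, and the linking-and-cutting-tree data structure uses linear space. The main obstacle I anticipate is purely an implementation detail of the augmentation: linking-and-cutting trees store data at vertices rather than at edges, so I must encode each edge's weight at one of its endpoints and arrange the ``lightest edge on a path'' query, together with the subsequent cut, so as to identify and remove the correct edge, while ensuring that the re-rooting (evert) operations used to expose a path do not corrupt these weights. This is a standard augmentation of linking-and-cutting trees, but it is the one place where care is needed to keep the per-operation cost at $O(\log n)$.
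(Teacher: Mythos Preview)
Your proposal is correct and follows essentially the same approach as the paper: maintain a spanning forest with a linking-and-cutting tree, process edges in order, and for each edge either link (setting $\tau=-1$) or find the lightest edge on the tree path, swap it out, and set $\tau$ to one more than its index. The paper's proof is terser and omits the implementation remarks about storing edge weights at vertices, but the algorithm and the analysis are the same.
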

\begin{proof}
We store the vertices of $G$ into a linking and cutting tree, and then process the edges in increasing order. When adding an edge $e_i = \{u_i, v_i\}$ to the forest we check if it creates a cycle. If so, we find the lightest edge on the path from $u_i$ to $v_i$, record one more than its index as $\tau(e_i)$, remove the light edge from the forest, and add $e_i$ to the forest. If adding $e_i$ does not create a cycle, then we add it to the forest and record $-1$ for $\tau(e_i)$. For each edge we do $O(\log n)$ work, for a total processing time of $O(m\log n)$.
\end{proof}

\begin{theorem}\label{thm:connected-component}
Given a relational event graph $G$ the problems of determining the number of connected components, nontrivial connected components, average size of a connected component,  average size of a nontrivial connected component, and the number of loopy edges in $G_{i,j}$ can be reduced to dominance counting in $O(m\log n)$ time and linear~space.
\end{theorem}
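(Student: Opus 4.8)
The plan is to reduce every requested quantity to the rank of the graphic matroid on the slice, supplemented by the isolated-vertex count already supplied by Theorem~\ref{thm:vertex-degree}. First I would invoke Lemma~\ref{lem:connected-component} to precompute $\tau$ for the graphic matroid in $O(m\log n)$ time and linear space, and then apply Lemma~\ref{lem:general-matroid} to map the edges to points in $\Real^2$ so that the rank $r = \rank(G_{i,j})$ of any slice is recovered by a single dominance counting query; this mapping stays within the same $O(m\log n)$ time and linear space budget. In parallel I would build the degree structure of Theorem~\ref{thm:vertex-degree} with $d=0$, which counts the number $z$ of isolated vertices of the slice using $O(1)$ dominance queries and only linear preprocessing. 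The slice edge count $j-i+1$ and the fixed total vertex count $n=|V|$ are available directly.

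The second step is to express each target as a short arithmetic combination of $r$, $z$, $n$, and $j-i+1$, each costing only $O(1)$ dominance queries. Since the graphic matroid rank of a set of edges equals the number of vertices minus the number of connected components, the slice $G_{i,j}$ on the fixed vertex set $V$ has exactly $n-r$ connected components when isolated vertices are counted as singleton components; subtracting $z$ yields the number of nontrivial components $n-r-z$. The number of loopy edges is the matroid nullity $(j-i+1)-r$, since the edges outside a spanning forest are precisely those that close a cycle. The average component size is $n/(n-r)$, and, because the vertices lying in nontrivial components number $n-z$, the average nontrivial component size is $(n-z)/(n-r-z)$; each average is one division of two dominance-query results.

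I expect the only genuine subtlety to be the clean separation of trivial (isolated-vertex) components from nontrivial ones: the graphic matroid rank by itself conflates the two, so the argument really does need the extra isolated-vertex count of Theorem~\ref{thm:vertex-degree} to isolate $z$ and thereby justify the nontrivial-component and average-nontrivial-size formulas. Once this accounting is pinned down, every query is answered by a constant number of dominance counting queries, the preprocessing is dominated by the $O(m\log n)$ linking-and-cutting-tree computation of $\tau$ for the graphic matroid, and all auxiliary structures use linear space, matching the claimed bounds.
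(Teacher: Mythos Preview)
Your proposal is correct and follows essentially the same route as the paper: use Lemmas~\ref{lem:general-matroid} and~\ref{lem:connected-component} to obtain the graphic matroid rank $r$ via dominance counting, bring in Theorem~\ref{thm:vertex-degree} for the isolated-vertex count $z$, and then read off each requested quantity as a simple arithmetic combination of $r$, $z$, $n$, and $j-i+1$. Your writeup is in fact more explicit than the paper's about the exact formulas (e.g., $n-r$, $n-r-z$, $(j-i+1)-r$, and the two averages), but the underlying argument is the same.
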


\begin{proof}
The number of connected components follows from the matroid rank problem (Lemma~\ref{lem:general-matroid}) together with Lemma~\ref{lem:connected-component}. For nontrivial components we use Theorem~\ref{thm:vertex-degree} to count isolated vertices and subtract this value from the number of forests. The average component sizes can then be computed easily. To count the number loopy edges we observe that the number of loopy edges equals the total number of edges minus the number of edges in a spanning forest, i.e., it is the number of edges in the slice minus the graphic matroid rank of the slice.
\end{proof}

When computing $\tau$ for the bicycle matroid we will need to dynamically maintain a pseudoforest. To do this we augment the linking and cutting tree with a dictionary whose keys are the tree roots and whose associated values are the lightest edges in the cycles of the corresponding pseudotrees (or null for tree components). Thus, the linking and cutting tree always stores the maximum spanning forest, and the dictionary holds the missing edges of each pseudotree.

\begin{lemma}\label{lem:loopy-component}
We can precompute $\tau$ for the bicycle matroid in $O(m\log n)$ time and linear space.
\end{lemma}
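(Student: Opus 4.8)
The plan is to run the meta-algorithm on the bicycle matroid exactly as in Lemma~\ref{lem:connected-component}, but maintaining a maximum-weight basis of the bicycle matroid — a maximum spanning pseudoforest — in place of a spanning forest. Following the augmentation already described in the paragraph preceding the lemma, I would store the maximum spanning forest of the current pseudoforest in a linking and cutting tree, each edge weighted by its index, together with a dictionary mapping the root of every cyclic component to that component's unique non-tree edge; since the stored tree is a maximum spanning tree, this non-tree edge is precisely the lightest edge of the component's cycle. Processing the edges $e_k=\{u,v\}$ in index order, each step must decide whether $e_k$ is independent (in which case $\tau(e_k)=-1$) and, otherwise, find the lightest edge $g$ of the unique circuit created by $e_k$, set $\tau(e_k)$ to one more than the index of $g$, and exchange $g$ for $e_k$ to restore the invariant.

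The independence test is a short case analysis driven by two root lookups and two dictionary lookups. If $u$ and $v$ lie in different components of which at most one is cyclic, or if they lie in a single acyclic component, then adding $e_k$ still leaves at most one cycle per component, so $e_k$ is independent; I set $\tau(e_k)=-1$ and link $e_k$. In the same-component subcase the insertion creates a new cycle, so to keep the stored tree a maximum spanning tree I first cut the lightest edge on the tree path from $u$ to $v$ (found by a path-min query) and record it as the new chord in the dictionary. In the different-components subcase I simply link and, if one side was cyclic, reassign its chord to the merged root. The remaining configurations — $u$ and $v$ in two distinct cyclic components, or in one cyclic component — are where $e_k$ is dependent, and these carry the real work.

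In the dependent cases the unique circuit is a \emph{bicycle}: a theta subgraph or a tight or loose handcuff built from the one or two cycles already present together with the tree path joining them. The key observation is that each involved cycle's lightest edge is already recorded in the dictionary, and the lightest edge of any tree path is a native linking-and-cutting-tree query, so $g$ is the minimum of a constant number of candidates: the stored chord(s), the lightest edge of the tree path from $u$ to $v$, and the lightest edge on the path connecting the two cycles. To obtain that last candidate for an endpoint and a cycle with stored chord $\{a,b\}$, I would find the vertex where the endpoint's path meets the cycle: rerooting the tree at $a$, the lowest common ancestor of the endpoint with $b$ lies on the cycle and is exactly this meeting vertex, after which a single path-min query returns the lightest connecting edge. (Rerooting and lowest-common-ancestor are standard augmentations of linking and cutting trees, in addition to the root and path-min operations already used.) Having found $g$, I restore the invariant with $O(1)$ cut/link operations and dictionary updates: if $g$ is a stored chord, the corresponding cycle is destroyed and the affected components merge into a single pseudotree; if $g$ is a tree edge on a connecting path, that path is broken and the component splits into two pseudotrees; in every topological configuration one checks that the result is again a maximum spanning pseudoforest whose chords are the cycles' lightest edges.

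The main obstacle is precisely this dependent case: correctly identifying the bicircular circuit as a theta or handcuff, extracting the lightest edge on the path between two vertex-disjoint cycles from the evert, lowest-common-ancestor, and path-min primitives, and verifying that the constant-size family of cut/link/dictionary updates re-establishes the invariant in each configuration (merge versus split, chord removed versus tree edge removed). Once these are settled, every edge is handled with $O(1)$ linking-and-cutting-tree operations and $O(1)$ dictionary operations, each costing $O(\log n)$, for total time $O(m\log n)$; since the dictionary holds one chord per cyclic component, the space is linear, matching Lemma~\ref{lem:connected-component}.
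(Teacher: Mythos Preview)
Your proposal is correct and follows essentially the same approach as the paper: maintain the maximum spanning forest in a linking and cutting tree together with a dictionary recording each pseudotree's chord, and handle the same five-way case analysis (your independent/dependent split and sub-cases coincide with the paper's cases (1)--(5)). Your write-up is actually more explicit than the paper's about how to locate the lightest edge of the bicircular circuit in the dependent cases, supplying the evert/LCA/path-min mechanics that the paper leaves implicit when it says to ``find and discard the lightest edge in the union of the two cycles and the path (if it exists) between them.''
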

\begin{proof}
When adding a edge $e_k = \{u_k, v_k\}$ we consider five possible cases: (1) two trees are joined, (2) a cycle is created in a tree,  (3) a tree and a pseudotree are joined, (4) a second cycle is formed in a pseudotree, (5) two pseudotrees are joined.

In cases (1), (2) and (3) we are left with a pseudoforest so we record $-1$ for $\tau(e_k)$. In case (2) we remove the lightest edge on the cycle formed by $e_k$ from the linking and cutting tree, and place it in the dictionary; in all cases we add $e_k$ to the linking and cutting tree. In case (3) we update the key for the lightest edge in the pseudotree with its new root, if the root changes.

In cases (4) and (5) we find and discard the lightest edge in the union of the two cycles and the path (if it exists) between them, 
either returning to a component with one cycle or splitting it into two components each with a cycle. We update the cutting and linking tree and dictionary, and record one more than the index of the discarded edge as $\tau(e_k)$.

Since we only added $O(n)$ space and $O(m\log n)$ time to the procedure in Lemma~\ref{lem:connected-component} we have the same space and time bounds.
\end{proof}

\begin{theorem}
Given a relational event graph $G$ the problems of determining the number of loopy components, tree components, and nontrivial tree components in $G_{i,j}$ can be reduced to dominance counting in $O(m\log n)$ time and linear space.
\end{theorem}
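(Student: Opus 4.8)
The plan is to express each of the three desired quantities as a fixed, constant-size linear combination of the graphic-matroid rank $\rank_g(G_{i,j})$, the bicycle-matroid rank $\rank_b(G_{i,j})$, and the number of isolated vertices, each of which we can already extract from a dominance counting query. First I would build two instances of the matroid rank data structure of Lemma~\ref{lem:general-matroid}: one for the graphic matroid, using the $\tau$-precomputation of Lemma~\ref{lem:connected-component}, and one for the bicycle matroid, using Lemma~\ref{lem:loopy-component}. Both cost $O(m\log n)$ time and linear space, and this preprocessing dominates the whole reduction; I would also store the constant $|V|$. The two rank formulas recalled earlier are the only structural facts needed: the graphic-matroid rank equals $|V|$ minus the number of connected components (counting each isolated vertex as its own component), and the bicycle-matroid rank equals $|V|$ minus the number of tree components.

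From these two identities the three counts drop out directly. Writing $t$ for the number of tree components, the bicycle rank formula gives $t = |V| - \rank_b(G_{i,j})$, so one dominance query plus the stored constant suffices. Since every component is either a tree component or a loopy component, the number of loopy components is the total number of components minus $t$; substituting the two rank formulas, the $|V|$ terms cancel and the number of loopy components equals $\rank_b(G_{i,j}) - \rank_g(G_{i,j})$, computed from the two rank queries alone. Finally, because every isolated vertex is a (trivial) tree component and every nontrivial tree component has at least two vertices, the nontrivial tree components are exactly the tree components that are not isolated vertices; their count is therefore $t$ minus the number of isolated vertices, and the latter is supplied by Theorem~\ref{thm:vertex-degree}.

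I do not expect a genuine obstacle here, since the rank formulas are already established; the only real care required is bookkeeping about isolated vertices. One must fix the convention that the component count includes isolated vertices so that the rank identities apply verbatim, confirm that this convention makes the $|V|$ terms cancel in the loopy-component count, and handle the \emph{nontrivial} qualifier by subtracting precisely the isolated-vertex count returned by Theorem~\ref{thm:vertex-degree}. Once this is pinned down, each query is answered by a constant number of dominance counting queries, and all preprocessing fits in $O(m\log n)$ time and linear space, matching the bounds of Lemmas~\ref{lem:connected-component} and~\ref{lem:loopy-component}.
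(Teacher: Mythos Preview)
Your proposal is correct and follows essentially the same approach as the paper: use the bicycle-matroid rank (via Lemma~\ref{lem:loopy-component}) to get the number of tree components, subtract that from the total number of components (via the graphic matroid, i.e.\ Theorem~\ref{thm:connected-component}) to get the loopy components, and subtract the isolated-vertex count (Theorem~\ref{thm:vertex-degree}) to get the nontrivial tree components. Your explicit rank identities $t=|V|-\rank_b$ and $\#\text{loopy}=\rank_b-\rank_g$ just unpack these same subtractions.
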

\begin{proof}
The number of trees follows from the matroid rank problem (Lemma~\ref{lem:general-matroid}) and Lemma~\ref{lem:loopy-component}. For loopy components we also build the data structure in Theorem~\ref{thm:connected-component} and subtract the number of trees from the number of connected components. For nontrivial trees we build the data structure in Theorem~\ref{thm:vertex-degree} and subtract the number of isolated vertices from the number of trees.
\end{proof}

\section{Counting edge neighbors}
In this section we compute the number of edges that have a given or bounded number of neighboring edges. This does not seem to be an instance of the matroid rank problem. Instead, we reduce it to a rectangle stabbing problem.

For an edge $e_k$ we define a \emph{past neighbor} to be an edge $e_i$ sharing at least one vertex with $e_k$  and having $i < k$, and we define a \emph{future neighbor} to be an edge $e_j$ sharing at least one vertex with $e_k$ and having $k < j$. Let $\pi_r(e_k)$ denote the least $i$ such that $e_k$ has $r$ neighbors in $G_{i,k}$ by $\pi_r(e_k)$, and let $\phi_r(e_k)$ denote the greatest $j$ such that $e_k$ has $s$ neighbors in~$G_{k,j}$.

\begin{lemma}
We can precompute $\pi_r$ and $\phi_s$ in $O((r+s)m)$ time and linear space.
\end{lemma}
\begin{proof}
First we precompute $\pi_r$ by processing the edges (as half-edges) in index order, adding them to a dictionary structure, as in Lemma~\ref{lem:vertex-degree}, indexed by the vertex and storing the insertion times in $r$-sized queues. When inserting a new edge $e_k = \{u_k, v_k\}$ we consider the queues for both $u_k$ and $v_k$. If the sum of the sizes of the two queues is less than $r$ then we record $-1$ for $\pi_r(e_k)$. Otherwise we iterate through the queue to find the least $i$ such that there are exactly $r$ neighbors with index greater than $i$ and record this as $\pi_r(e_k)$. Finally, we add the two half-edges to their respective queues, dropping the top half-edge if the queues overflow. This process is done in $O(rm)$ time, $O(r)$ to iterate through the queues, and $O(n + m)$ space. To compute $\phi_s$ we repeat the process in reverse, which takes $O(sm)$ time in $O(n+m)$ space.
\end{proof}

\begin{lemma}\label{lem:rectangle-stabbing}
Given a set of rectangles, the problem of determining which rectangles are stabbed by (enclosing) a query point can be reduced to a constant number of dominance counting queries.
\end{lemma}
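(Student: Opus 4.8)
The plan is to read a query point $q=(a,b)$ as stabbing (being enclosed by) a rectangle $R=[x_1,x_2]\times[y_1,y_2]$ exactly when all four halfplane constraints $x_1\le a$, $a\le x_2$, $y_1\le b$, $b\le y_2$ hold, so that the task becomes counting the rectangles satisfying this conjunction. The difficulty is that this is a four-sided query, whereas dominance counting is only two-sided, so the crux is to decompose it into a bounded number of dominance counts. To set this up I would first record, for each rectangle, its four corners and sort them into four separate point sets according to corner type: lower-left corners $(x_1,y_1)$, lower-right corners $(x_2,y_1)$, upper-left corners $(x_1,y_2)$, and upper-right corners $(x_2,y_2)$, one point per rectangle in each set.

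The key step is an inclusion--exclusion identity obtained by complementing the two upper constraints, writing $[a\le x_2]=1-[x_2<a]$ and $[b\le y_2]=1-[y_2<b]$, and expanding the product of the four indicators. Summing the resulting expression over all rectangles gives
\[
N(a,b)=\sum[x_1\le a][y_1\le b]-\sum[x_2<a][y_1\le b]-\sum[x_1\le a][y_2<b]+\sum[x_2<a][y_2<b].
\]
Here I have used the corner monotonicity $x_1\le x_2$ and $y_1\le y_2$ to discard the redundant factor $[x_1\le a]$ in any term already containing $[x_2<a]$, and likewise $[y_1\le b]$ in any term containing $[y_2<b]$; this redundancy is exactly what collapses each four-sided term into a two-sided one. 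Each of the four sums is then a dominance count against one of the four corner sets: the number of lower-left corners dominated by $q$, minus the lower-right corners dominated by $q$, minus the upper-left corners dominated by $q$, plus the upper-right corners dominated by $q$. Taking this signed combination yields the number of stabbed rectangles using exactly four dominance queries.

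The remaining point requiring care, which I expect to be the only technical obstacle, is the mixture of strict and non-strict inequalities produced by complementing the upper constraints. I would resolve this using the integer coordinates of the dominance-counting setting, replacing each strict bound $x<a$ by the equivalent $x\le a-1$ (and similarly for $y$), or equivalently by exploiting the freedom, noted in the definition of dominance counting, to count points in whichever quadrant of $q$ is desired; either device reduces every term to a standard dominance count with consistent tie-breaking. Since the construction forms only four point sets and issues only four queries, independent of the number of rectangles, the reduction uses a constant number of dominance counting queries, as claimed.
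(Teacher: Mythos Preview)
Your argument is correct: the algebraic inclusion--exclusion you write down is valid, the simplification using $x_1\le x_2$ and $y_1\le y_2$ to drop redundant factors is sound, and the handling of strict versus non-strict inequalities via the integer grid is exactly the right fix. You end up with four dominance queries, one per corner type, all taken in the same (southwest) quadrant of~$q$.

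The paper proves the same lemma by a closely related but not identical inclusion--exclusion. Rather than complementing the two ``upper'' constraints algebraically, it assigns signed contributions to six specific (corner type, quadrant-of-$q$) combinations spread over three of the four quadrants, arguing geometrically that a stabbed rectangle contributes exactly $+1$ (from its lower-left corner) while every other rectangle contributes either two cancelling corners or none. This yields three signed dominance structures (one per quadrant used) instead of your four single-quadrant structures. Both are instances of the same folklore corner-counting trick; your version is arguably cleaner and more self-contained, while the paper's version has the mild practical advantage that combining signed points within a quadrant into a single structure can reduce the number of queries issued at run time. Either way the reduction is to $O(1)$ dominance queries, so the lemma follows.
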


\begin{wrapfigure}[13]{r}{0.26\textwidth}
\vspace{-1em}
\centering
\includegraphics[width=0.23\textwidth]{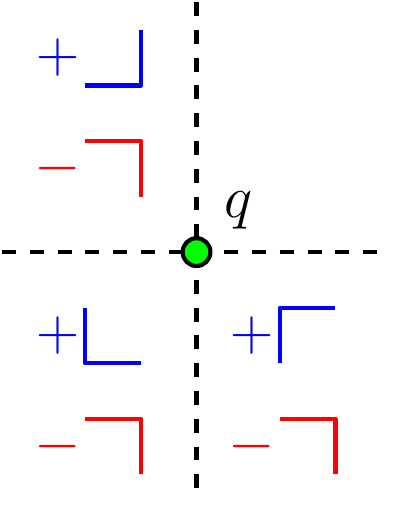}
\vspace{-1em}
\caption{Turning stabbing into dominance.}
\label{fig:rectangle-stab}
\vspace{-2em}
\end{wrapfigure}
\noindent\emph{Proof.}
The number of rectangles stabbed by point $q$ can be reduced to a linear combination of the counts of rectangle corners belonging to six different combinations of  corner type and apex-$q$ quadrant, using an inclusion-exclusion relation that seems to be folklore.
Figure~\ref{fig:rectangle-stab} provides an illustration: if we add $+1$ for each rectangle whose geometric relationship to $q$ is indicated by the blue L-shapes, and $-1$ for each rectangle whose relation to $q$ is indicated by the red L-shapes, then each rectangle containing $q$ adds a total of $+1$ to this sum (only for its lower left corner) while each other rectangle adds zero (either with two corners that cancel each other, or no corners). Therefore the total sum equals the number of rectangles stabbed by $q$.

Therefore, to answer rectangle stabbing queries, we may build three (signed) dominance counting data structures,  one for each of the nonempty quadrants in the figure, giving us the contributions from each quadrant. \qed

\begin{theorem}\label{thm:future-past}
Given a relational event graph $G$ the problem of determining the number of edges with at most $r$ past neighbors and at most $s$ future neighbors in $G_{i,j}$ can be reduced to dominance counting in $O((r+s)m)$ time and linear space.
\end{theorem}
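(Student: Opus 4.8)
The plan is to translate the two neighbor-count conditions into inequalities on the slice endpoints $i$ and $j$, using the quantities $\pi_r$ and $\phi_s$ precomputed in the previous lemma, and then to recognize the resulting system as a rectangle stabbing instance that Lemma~\ref{lem:rectangle-stabbing} already reduces to dominance counting.

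First I would work out what it means for an edge $e_k$ to be counted by a query on the slice $G_{i,j}$. The edge must lie in the slice, so $i \le k \le j$; its past neighbors within the slice are exactly its neighbors of index in $[i, k-1]$, and its future neighbors within the slice are its neighbors of index in $[k+1, j]$. By the definition of $\pi_r$, the number of past neighbors in the slice is at most $r$ precisely when the slice does not reach back as far as the $(r+1)$-st most recent past neighbor, i.e.\ when $i > \pi_r(e_k)$; symmetrically, the number of future neighbors is at most $s$ precisely when the slice does not reach forward to the $(s+1)$-st future neighbor, i.e.\ when $j \le \phi_s(e_k)$. Edges with fewer than $r+1$ past neighbors have $\pi_r(e_k) = -1$, and those with fewer than $s+1$ future neighbors have $\phi_s(e_k)$ set to the maximal index, so the corresponding inequality then holds automatically. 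Thus $e_k$ is counted by the query on $G_{i,j}$ if and only if
\[
\pi_r(e_k) < i \le k \le j \le \phi_s(e_k).
\]

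Next I would read this chain of inequalities geometrically. Treating a query as the point $(i,j)$ in the plane, the condition splits into $i \in [\pi_r(e_k)+1,\, k]$ and $j \in [k,\, \phi_s(e_k)]$, both nonempty intervals because $\pi_r(e_k) < k \le \phi_s(e_k)$ always holds. Hence each edge $e_k$ defines an axis-parallel rectangle
\[
R_k = [\pi_r(e_k)+1,\, k] \times [k,\, \phi_s(e_k)],
\]
and the number of edges counted by the query on $G_{i,j}$ equals the number of rectangles $R_k$ stabbed by the point $(i,j)$. Building these $m$ rectangles from the precomputed values takes $O(m)$ time, and applying Lemma~\ref{lem:rectangle-stabbing} then reduces the stabbing query to a constant number of dominance counting queries on the $O(m)$ rectangle corners, completing the reduction.

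For the resource bounds, the construction is dominated by computing $\pi_r$ and $\phi_s$, which the preceding lemma does in $O((r+s)m)$ time and linear space; forming the rectangles and their corner point sets adds only $O(m)$ time and linear space, matching the claimed bounds. I expect the main obstacle to be the correctness argument of the first step: pinning down the exact inequality that each of $\pi_r$ and $\phi_s$ encodes, and in particular verifying the boundary conventions for edges with fewer than $r+1$ past or $s+1$ future neighbors, so that every rectangle remains well-defined and nonempty. Once that equivalence is nailed down, the geometric reinterpretation and the appeal to Lemma~\ref{lem:rectangle-stabbing} are routine.
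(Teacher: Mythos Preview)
Your proposal is correct and follows essentially the same route as the paper: characterize membership of $e_k$ in the count by a chain of inequalities on $i$ and $j$ involving $\pi_r(e_k)$ and $\phi_s(e_k)$, reinterpret this as a rectangle-stabbing condition for the query point $(i,j)$, and invoke Lemma~\ref{lem:rectangle-stabbing}. The only discrepancy is a boundary convention: under the paper's definition $\pi_r(e_k)$ is the least $i$ with exactly $r$ neighbors in $G_{i,k}$, so the correct condition is $\pi_r(e_k)\le i$ rather than $\pi_r(e_k)<i$, yielding the rectangle $[\pi_r(e_k),k]\times[k,\phi_s(e_k)]$; you already flag this kind of off-by-one as the point to verify.
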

\begin{proof}
An edge $e_k$ has at most $r$ past neighbors and at most $s$ future neighbors in $G_{i,j}$ (and is in $G_{i,j}$) precisely when $\pi(e_k) \leq i \leq k \leq j \leq \phi(e_k)$.
If we view $(i, j)$ as a point in $\Real^2$, then this happens when the rectangle $[\pi_r(e_k), k] \times [k, \phi_s(e_k)]$ encloses the point $(i, j)$, which reduces to dominance counting by Lemma~\ref{lem:rectangle-stabbing}.
\end{proof}

\begin{corollary}
Given a relational event graph $G$ the problem of determining the number of edges with $r$ past neighbors and $s$ future neighbors, the number of isolated edges, and the number of edges with $k$ neighbors in $G_{i,j}$ can be reduced to dominance counting in $O((r+s)m)$ time and linear space, except for the number of edges with $k$ neighbors which takes $O(km)$ time and space.
\end{corollary}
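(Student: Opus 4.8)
The plan is to express each of the three quantities as a signed combination of the ``at most $r$ past and at most $s$ future'' counts supplied by Theorem~\ref{thm:future-past}, so that each reduces to a constant or $O(k)$ number of dominance counting instances. Write $f(r,s)$ for the number of edges of $G_{i,j}$ having at most $r$ past neighbors and at most $s$ future neighbors, with the convention $f(r,s)=0$ whenever $r<0$ or $s<0$; Theorem~\ref{thm:future-past} shows $f(r,s)$ reduces to dominance counting after $O((r+s)m)$ preprocessing in linear space.

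For the number of edges with exactly $r$ past neighbors and exactly $s$ future neighbors, I would use the two-dimensional inclusion--exclusion identity
\[
f(r,s) - f(r-1,s) - f(r,s-1) + f(r-1,s-1).
\]
Each of the four terms is an instance of Theorem~\ref{thm:future-past} with parameters at most $r$ and $s$, so building all four structures still costs $O((r+s)m)$ time and, since only a constant number of structures (each with $O(m)$ rectangle corners) is needed, linear space. The number of isolated edges is then immediate: an isolated edge of the slice is precisely an edge with no neighbor in the slice, i.e.\ with zero past and zero future neighbors, so it equals $f(0,0)$, obtained directly from Theorem~\ref{thm:future-past} at $r=s=0$ in $O(m)$ time and linear space.

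The key observation for counting edges with exactly $k$ neighbors is that the past and future neighbors of a fixed edge $e_\ell$ are disjoint, since every other edge has index either below or above $\ell$; hence the total number of neighbors of $e_\ell$ in the slice is the sum of its past-neighbor and future-neighbor counts. Consequently the number of edges with exactly $k$ neighbors is
\[
\sum_{p=0}^{k}\Bigl( f(p,k-p) - f(p-1,k-p) - f(p,k-p-1) + f(p-1,k-p-1)\Bigr),
\]
a sum of $k+1$ exact-count terms. These terms involve $\pi_p$ and $\phi_q$ for all $0 \le p,q \le k$, which I would obtain from a single pass of the queue-based precomputation with queue size $k$ (retaining the $k$ most recent half-edges per vertex suffices to read off $\pi_p(e_\ell)$ for every $p\le k$, and symmetrically for $\phi_q$), costing $O(km)$ time. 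Each of the $O(k)$ distinct parameter pairs yields its own rectangle-stabbing instance, and by Lemma~\ref{lem:rectangle-stabbing} each reduces to a constant number of dominance counting structures of $O(m)$ size; summing over the $O(k)$ pairs gives the claimed $O(km)$ time and space.

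The main obstacle is bookkeeping rather than any deep difficulty: one must verify that past/future disjointness makes the total neighbor count a clean convolution over $p+q=k$, and one must track the space carefully, since the exactly-$k$ case genuinely requires $\Theta(k)$ independent dominance structures (with distinct rectangle families arising from distinct $\pi_p,\phi_q$) and so incurs $O(km)$ space, in contrast to the linear space available when $r$ and $s$ are fixed and only a constant number of structures is built.
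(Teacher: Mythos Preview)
Your proof is correct and follows essentially the same approach as the paper: the same four-term inclusion--exclusion for exact $(r,s)$ counts, the specialization $r=s=0$ for isolated edges, and the summation over $p+q=k$ for exactly-$k$ neighbors. You supply slightly more justification (explicitly noting the disjointness of past and future neighbors and tracking the $O(km)$ space from the $O(k)$ distinct rectangle families), but the argument is the same as the paper's.
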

\begin{proof}
For edges with past and future neighbors we use Theorem~\ref{thm:future-past} to compute the four data structures that compute $N_{\le r', \le s'}$ (the number of edges with past and future edges bounded by $r'$ and $s'$) for all combinations of $r' \in\{ r, r-1\}$ and $s' \in\{ s,s-1\}$. Then to compute the number of edges with exactly $r$ past neighbors and $s$ future neighbors we use inclusion-exclusion:
\[
N_{=r,=s}=N_{\le r, \le s} - N_{\le r, \le s-1} - N_{\le r-1, \le s} + N_{\le r-1, \le s-1}.
\]
To count isolated edges we set $r=s=0$. To count edges with exactly $k$ neighbors we sum over the edges with $r$ past and $s$ future neighbors for all combinations of $r$ and $s$ satisfying $r+s = k$.
\end{proof}

\section{Determining influence}
In this section we designate a fixed set of vertices as \emph{influential vertices} and seek to find the number of \emph{influenced vertices}, where vertex $v$ is influenced if there is a path of index-increasing edges from a influential vertex to~$v$. Such a path will be called a \emph{path of influence}. If we think of the edges as communication events, then this models the flow of information from the influential vertices. Motivated by the degradation of information as it is relayed we also consider the number of \emph{$h$-influenced} vertices, i.e., vertices that are on a path of influence with less than $h$ edges. This query also does not appear to be an instance of the general matroid slice problem.

For each edge insertion $e_k = (u_k, v_k)$ we define $\iota(e_k)$ to be the greatest $i$ such that $v_k$ is influenced in $G_{i, k}$, and $\lambda(e_k)$ the least $j$ such that $v_k$ is influenced in $G_{k, j}$.

\begin{lemma}\label{lem:influence}
The values of $\iota$ and $\lambda$ can be computed in linear time and space.
\end{lemma}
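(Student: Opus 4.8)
The plan is to compute the two functions by a pair of sweeps over the edge sequence, one in each direction, relying on the observation that influence within a fixed window is exactly forward temporal reachability: a vertex is influenced in $G_{i,j}$ if and only if, processing the edges $e_i,\dots,e_j$ in index order and marking $v_\ell$ influenced whenever $e_\ell=(u_\ell,v_\ell)$ is processed with $u_\ell$ already influenced, $v$ ends up marked. Two monotonicities follow: for a fixed right endpoint the influenced set only grows as the left endpoint decreases, and symmetrically for the left endpoint, so $\iota$ and $\lambda$ are genuine thresholds. A third, crucial observation is that whenever a window is extended by a new \emph{earliest} edge $e_k$, that edge can create influence only if its tail $u_k$ is an influential vertex, since any influence path using $e_k$ must begin with it and hence must start at $u_k$.

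For $\iota$ I would sweep forward, maintaining for every vertex $v$ a value $g(v)$, the greatest $i$ for which $v$ is influenced in the window ending at the current index; influential vertices start at $+\infty$ and all others at $-\infty$. When $e_k=(u_k,v_k)$ is processed, $v_k$ becomes reachable from whatever reached $u_k$ before time $k$, provided the window contains $e_k$, so I update
\[
g(v_k)\leftarrow\max\bigl(g(v_k),\,\min(g(u_k),k)\bigr)
\]
and record $\iota(e_k)=g(v_k)$. Since adding $e_k$ to a window ending at $k$ can newly influence only its head $v_k$, no other entry changes, and the whole pass costs $O(1)$ per edge in linear space.

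Computing $\lambda(e_k)$, the least $j$ for which $v_k$ is influenced in $G_{k,j}$, is the harder half, and I would sweep the start index $k$ from $m-1$ down to $0$ while maintaining $\mathrm{onset}(v)$, the least $j$ at which $v$ is influenced in the window \emph{starting} at the current $k$; then $\lambda(e_k)$ is read off as $\mathrm{onset}(v_k)$ once step $k$ is complete. By the third observation above, decreasing $k$ injects new influence only when $u_k$ is influential, in which case $v_k$ acquires $\mathrm{onset}(v_k)=k$; but unlike the bounded-degree computation this influence is \emph{transitive}, so it must be cascaded forward along later edges, lowering the onsets of downstream vertices. The main obstacle is to bound this cascade: I would argue that relaxing an edge $e_\ell=(a,b)$ can only ever set $\mathrm{onset}(b)$ to the fixed value $\ell$, so the edge is effective the first time $\mathrm{onset}(a)$ falls below $\ell$ and never again. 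Advancing a per-vertex pointer through each vertex's out-edges in index order as its onset decreases therefore relaxes every edge at most once, so the reverse sweep also runs in linear total time and space. Verifying that the availability restriction (only edges of index $\ge k$ may be used) is automatically respected by this pointer scheme, and that reading $\lambda(e_k)$ at the end of step $k$ yields the value for the correct window, are the remaining details.
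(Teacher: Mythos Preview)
Your forward sweep for $\iota$ is exactly what the paper does (its $\iota(v)$ is your $g(v)$), and your justification is more careful than the paper's one-paragraph sketch.

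For $\lambda$ the two approaches diverge. The paper merely says ``the computation of $\lambda$ is similar''; you correctly notice that a na\"ive backward sweep is \emph{not} symmetric to the forward one, because lowering the start index can trigger a transitive cascade of onset updates. Your amortized argument---each edge $e_\ell=(a,b)$ is relaxed exactly once, namely the first time $\mathrm{onset}(a)$ drops below $\ell$---is sound and does give linear time; the pointer should march through $a$'s out-edges in \emph{decreasing} index order, but that is the detail you flag.

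What your treatment misses, and what makes the paper's ``similar'' defensible, is that $\lambda$ can be read off during the \emph{same forward sweep} used for $\iota$. Since influence is monotone in both endpoints, $\lambda(e_k)$ equals the least $j$ for which your maintained value $g(v_k)$ reaches at least $k$; and $g(v)$ changes only when an edge with head $v$ is processed. So keep, for each vertex $v$, a queue of indices $k$ of edges $e_k$ with $v_k=v$ whose $\lambda$ is still pending; after updating $g(v)$ at step $j$, dequeue all $k\le g(v)$ and set $\lambda(e_k)=j$. Each edge is enqueued and dequeued once, so the whole pass is linear. This is genuinely ``similar'' to the $\iota$ computation and avoids the backward cascade entirely.

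So your proof is correct but takes a harder route than necessary for $\lambda$; the paper's proof is too terse to see which route it intends, but the simple forward variant above is the natural reading of ``similar''.
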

\begin{proof}
We consider the edges $e_k = (u_k, v_k)$ in sequence order, setting $\iota(e_k)$ and $\lambda(e_k)$ as we do.

The edge $e_k = (u_k, v_k)$ is on a path on influence only if either $u_k$ is a influential vertex or an influenced vertex ($\iota(u_k)$ is set). If $u_k$ is an influential vertex, then $\iota_{k}(v_k) = k$. Otherwise, we set $\iota(v_k)$ to $\iota(u_k)$ when $\iota(v_k) < \iota(u_k)$ or $\iota(v_k)$ is unset, and do nothing when $\iota(v_k) \geq \iota(u_k)$. The computation of $\lambda$ is similar.
\end{proof}

\begin{theorem}\label{thm:influence}
Given a relational event graph $G$ the problem of determining the number of influenced vertices in the slice $G_{i,j}$ can be reduced to dominance counting in linear time and space.
\end{theorem}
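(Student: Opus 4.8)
The plan is to reduce the count of influenced vertices to a single signed quadrant dominance query using only the values $\iota$ from Lemma~\ref{lem:influence}, together with a distinct-counting transformation in the spirit of the colored range counting reduction of Gupta et al.~\cite{ColorRangeQueries}. First I would establish a clean per-edge characterization: a vertex $v$ is influenced in $G_{i,j}$ if and only if there is an edge $e_k$ with head $v$ satisfying $\iota(e_k)\ge i$ and $k\le j$. For the forward direction, take a path of index-increasing influence edges ending at $v$ and let $e_k$ be its last edge; all of its edges have index in $[i,k]$, so $v=v_k$ is influenced in $G_{i,k}$ and hence $\iota(e_k)\ge i$. For the converse, $\iota(e_k)\ge i$ already gives that $v_k$ is influenced in $G_{i,k}$, and this is a subgraph of $G_{i,j}$ since $k\le j$. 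Note that $\iota(e_k)\le k$ always holds, so the hypothesis $\iota(e_k)\ge i$ forces $i\le k$ automatically; the operative condition is the two-sided one, $\iota(e_k)\ge i$ and $k\le j$.

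The difficulty is that a vertex may be the head of many qualifying edges, and we must count each influenced vertex only once. I would resolve this exactly as in distinct (colored) counting. For each vertex $v$, order the edges with head $v$ by index and keep only the prefix maxima of $\iota$, namely the edges whose $\iota$ strictly exceeds that of every earlier edge with the same head. Since $\max\{\iota(e_k):\text{head }v,\ k\le j\}$ is nondecreasing in $j$, and any qualifying edge forces a prefix maximum of value $\ge i$ and index $\le j$, the vertex $v$ is influenced in $G_{i,j}$ exactly when its smallest-index prefix maximum of value $\ge i$ has index $\le j$. Writing $p(e)$ for the $\iota$-value of the previous prefix maximum at $v$ (or $-1$ if none), that unique witness is the prefix maximum with $p(e)<i\le\iota(e)$ and $k\le j$.

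To turn this into dominance counting I would map each prefix-maximum edge $e$ (with index $k$, value $y=\iota(e)$, previous value $p=p(e)$) to two signed points, $(k,y)$ with weight $+1$ and $(k,p)$ with weight $-1$ (up to a constant coordinate shift to meet the positivity convention). Because $p<y$, the signed count of these points in the quadrant $\{x\le j,\ y\ge i\}$ equals $\#\{e:y\ge i,\ k\le j\}-\#\{e:p\ge i,\ k\le j\}=\#\{e:p<i\le y,\ k\le j\}$, which counts each influenced vertex exactly once. This quadrant query is a dominance count in the sense used throughout the paper, and signed dominance structures already appeared in Lemma~\ref{lem:rectangle-stabbing}. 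There are $O(1)$ points per edge, the prefix maxima and their previous values are computed by one index-order sweep in linear time and space, and $\iota$ itself is linear by Lemma~\ref{lem:influence}; if the designated influential vertices are themselves to be counted, one adds their (constant) number, since they are influenced in every slice.

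The step I expect to be the main obstacle is the second one: verifying that the prefix-maxima-plus-previous-value construction counts each influenced vertex exactly once and nothing extra. The cancellation argument hinges on the two facts that the per-vertex prefix maximum of $\iota$ is monotone in $j$ and that consecutive prefix maxima have strictly increasing $\iota$-values, so that $p<i\le y$ isolates a single witness edge per influenced vertex; the rest of the reduction is routine given the characterization and the dominance-counting machinery assumed earlier.
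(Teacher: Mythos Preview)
Your argument is correct and takes a genuinely different route from the paper's proof. The paper uses \emph{both} $\iota$ and $\lambda$: for each edge $e_k$ it forms the rectangle $(\iota(e_k),k]\times[k,\lambda(e_k))$, reduces the count of ``non-influenced destination vertices'' to rectangle stabbing, and then invokes Lemma~\ref{lem:rectangle-stabbing}. You instead use only $\iota$, establish the clean characterization ``$v$ is influenced in $G_{i,j}$ iff some edge with head $v$ has $\iota\ge i$ and index $\le j$,'' and then treat the head vertex as a color and apply a distinct-counting (Gupta et~al.) reduction via per-vertex prefix maxima of $\iota$, yielding a single signed dominance query. Your telescoping argument that the $\pm1$ pair $(k,\iota(e))$, $(k,p(e))$ isolates exactly one witness per influenced vertex is sound: restricting to prefix maxima with index $\le j$, the $\iota$-values are strictly increasing, so the interval condition $p(e)<i\le\iota(e)$ picks out at most one of them, and it exists precisely when the running maximum has reached $i$ by index $j$.

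What each approach buys: yours is more economical---it dispenses with $\lambda$ entirely and ties the problem directly into the colored-range-counting framework already highlighted in the paper, and it makes the handling of multiplicity completely explicit. The paper's route keeps the reduction uniform with the other non-matroid sections (edge neighbors, $h$-influence) that also pass through rectangle stabbing, at the cost of computing and storing a second array $\lambda$.
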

\begin{proof}
We will count the number of influenced vertices in $G_{i,j}$ by counting the number of destination vertices of edges in $G_{i,j}$ that are not influenced (counted with multiplicity) and then taking the complement. A vertex $v$ cannot be influenced in $G_{i,j}$ unless there is an edges $e_k = (u_k, v_k)$ in $G_{i,j}$ with $v_k = v$, i.e., $i \leq k \leq j$. Now $v_k$ is not influenced in $G_{i,j}$ whenever $\iota(e_k) < i \leq k \leq j < \lambda(e_k)$, i.e., when the point $(i,j)$ is in the rectangle $(\iota(e_k), k] \times [k, \lambda(e_k))$. Now that the problem is reduced to rectangle stabbing we use Lemma~\ref{lem:rectangle-stabbing}.
\end{proof}

\begin{theorem}
Given a relational event graph and a predetermined value $h$ the problem of determining the number of $h$-influenced vertices in the slice $G_{i,j}$ can be reduced to dominance queries in $O(hm)$ time and linear space.
\end{theorem}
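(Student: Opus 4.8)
The plan is to mirror the unbounded construction of Theorem~\ref{thm:influence}, replacing the influence thresholds $\iota,\lambda$ of Lemma~\ref{lem:influence} by hop-bounded analogues. For each edge $e_k=(u_k,v_k)$ I would define $\iota_h(e_k)$ to be the greatest $i$ such that $v_k$ is $h$-influenced in $G_{i,k}$, and $\lambda_h(e_k)$ the least $j$ such that $v_k$ is $h$-influenced in $G_{k,j}$. The first point to check is that a single threshold in each direction suffices. Decreasing $i$ only enlarges the window, so any index-increasing path of fewer than $h$ edges that witnesses $h$-influence in $G_{i,k}$ persists unchanged, with the same hop count, in every $G_{i',k}$ with $i'\le i$. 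Hence $h$-influence in $G_{i,k}$ is downward closed in $i$ and $h$-influence in $G_{k,j}$ is upward closed in $j$, so $\iota_h$ and $\lambda_h$ are well defined.

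To compute $\iota_h$ I would sweep the edges in index order while tracking reachability by hop count. Conceptually, maintain for every vertex $v$ and every $\ell<h$ a value $A_\ell(v)$, the greatest window start from which $v$ is reachable by an index-increasing path of at most $\ell$ edges using the edges seen so far; processing $e_k=(u_k,v_k)$ performs the update $A_\ell(v_k)\gets\max(A_\ell(v_k),A_{\ell-1}(u_k))$, capping the start at $k$ and seeding influential sources at level $0$, after which $\iota_h(e_k)=A_{h-1}(v_k)$. This is $O(h)$ work per edge, hence $O(hm)$ time. To keep the space linear rather than $O(hn)$, I would compute the levels one at a time in a sequence of passes: pass $\ell$ reads the per-edge level-$(\ell-1)$ thresholds produced by the previous pass, maintains a single per-vertex running maximum of those thresholds over already-processed incoming edges, and writes the per-edge level-$\ell$ thresholds. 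Only two edge arrays and one vertex array are live at once, so the working space is $O(m+n)$. The values $\lambda_h(e_k)$ come from the symmetric sweep in decreasing index order.

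With the thresholds in hand the reduction is identical to the unbounded case. The destination $v_k$ of $e_k$ fails to be $h$-influenced in $G_{i,j}$ exactly when $\iota_h(e_k)<i\le k\le j<\lambda_h(e_k)$, i.e.\ when the query point $(i,j)$ lies in the rectangle $(\iota_h(e_k),k]\times[k,\lambda_h(e_k))$. Counting the edges of $G_{i,j}$ whose destination is not $h$-influenced is thus a rectangle-stabbing query, which Lemma~\ref{lem:rectangle-stabbing} converts into a constant number of dominance counting queries; taking the complement as in Theorem~\ref{thm:influence} yields the number of $h$-influenced vertices.

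I expect the main obstacle to be the hop-layered sweep rather than the final reduction. The delicate points are verifying that, because paths of influence use strictly index-increasing edges, the quantity $A_{\ell-1}(u_k)$ used at step $k$ reflects only edges of index below $k$ (so no witnessing path reuses $e_k$ or a later edge), and arranging the passes so that the running per-vertex maxima reproduce exactly the coupled update $A_\ell(v_k)\gets\max(A_\ell(v_k),A_{\ell-1}(u_k))$ while storing only adjacent levels. Once that bookkeeping is correct, the $O(hm)$ time bound, the linear space bound, and the dominance-counting reduction all follow as above.
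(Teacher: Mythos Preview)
Your proposal is essentially the paper's own argument: track hop-bounded analogues of $\iota$ and $\lambda$ at $O(h)$ cost per edge, then reuse the rectangle-stabbing reduction of Theorem~\ref{thm:influence} via Lemma~\ref{lem:rectangle-stabbing}. The paper's proof is a two-sentence sketch that simply says to ``keep track of $\iota$ and $\lambda$ for $k$-influence for each vertex and for each choice of $k\le h$,'' noting the $O(h)$ update per edge; your write-up fills this in with the explicit layered recurrence $A_\ell(v_k)\gets\max(A_\ell(v_k),A_{\ell-1}(u_k))$ and, going beyond the paper, explains how to run the levels as successive passes so the working space stays $O(n+m)$ rather than the $O(hn)$ that the paper's one-shot description would seem to require.
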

\begin{proof}
We modify the argument in Lemma~\ref{lem:influence} and Theorem~\ref{thm:influence} to keep track of $\iota$ and $\lambda$ for $k$-influence for each vertex and for each choice of  $k\le h$ instead of just for influence. It takes $O(h)$ time per edge to update these times of $k$-influence.
\end{proof}

\section{Counting triad closure events}

Define a \emph{triad closure event} in an undirected relational event graph to be an edge $e_k$ within a given slice $G_{i,j}$ such that $e$ is the final edge of at least one triangle; that is, such that the other two edges of the triangle also belong to the same slice but are earlier in the sequence of edges than $e_k$.
To count these events we define $\Delta(e_k)$ to be the smallest index $d$ such that $e_k$ does not belong to a triangle in $G_{d,k}$. Then, the number of triadic closure events for slice $G_{i,j}$ is exactly the number of edges $e_k$ satisfying $i<\Delta(e_k)<k\le j$, something that can be counted with the same mapping to $\Real^2$ and dominance query in Lemma~\ref{lem:general-matroid}. The difficulty, for this problem, is in the preprocessing: how do we compute $\Delta(e_k)$ efficiently, for all edges~$e_k$?

To solve this problem, we adapt a data structure of Eppstein and Spiro~\cite{EppSpi-WADS-09} for counting triangles in a dynamic graph. This data structure is based on the concept of the \emph{$h$-index} of the graph, the largest number $h$ such that the graph contains at least $h$ vertices of degree at least $h$; all graphs with $m$ edges satisfy $h=O(\sqrt m)$.  Eppstein and Spiro maintain a slowly-changing partition of the graph vertices into two subsets $H$ and $L$, where $H$ contains $O(h)$ vertices and where every vertex in $L$ has degree $O(h)$. We simplify this by computing the $h$-index of the aggregate graph and partitioning its vertices into static subsets $H$ and $L$, where $|H|\le h$ and where every vertex in $L$ has degree at most~$h$.

Next, we loop through the edges in sequence order, maintaining as we do two hash tables $E$ and $P$ indexed by pairs of vertices. The first of these two tables, $E[u,v]$, stores the most recent edge with those two endpoints (if such an edge has already been encountered in the edge sequence). The second table, $P[u,v]$ stores the two-edge path from $u$ to $v$ via a third node $w\in L$ that maximizes the index of the earlier of the two edges $(u,w)$ and $(w,v)$, if  such a path exists and $G$ has an edge~$(u,v)$. We also maintain an adjacency list for each vertex, listing the vertices connected to it by edges that have already been encountered.

From this information, we can compute $\Delta(e_k)$ in time $O(h)$: let $u$ and $v$ be the endpoints of $e_k$, look up in $P[u,v]$ the best path through a vertex in $L$, and find the best path through a vertex in $H$ by testing all $h$ choices for this vertex using $E$ to test each choice in constant time. Once $\Delta(e_k)$ has been computed, we may also update $E$ and the adjacency lists in constant time. To update $P$, for each endpoint $v$ of $e_k$ that belongs to $L$, loop through each neighbor $w$ of $v$, find the two-edge path combining $e_k$ and $E[v,w]$, and use this path to update $P[u,w]$ where $u$ is the other endpoint of $e_k$. This update process takes constant time per neighbor, and there are at most $h$ neighbors, so again the time is $O(h)$.

\begin{theorem}
Given an undirected relational event graph $G$ the problem of determining the number of triad closured in the slice $G_{i,j}$ can be reduced to dominance counting in $O(hm)$ time and linear space.
\end{theorem}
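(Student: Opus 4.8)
The plan is to assemble the two pieces already set up in the exposition above: a characterization of triad closure events as a dominance query over the quantities $\Delta(e_k)$, and an $O(hm)$-time preprocessing routine that computes every $\Delta(e_k)$. Once $\Delta$ is known for all edges, the theorem reduces to an application of Lemma~\ref{lem:general-matroid}.

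First I would justify the dominance reduction. The key monotonicity is that $G_{d,k}\subseteq G_{d',k}$ whenever $d'\le d$, so the property ``$e_k$ lies in a triangle of $G_{d,k}$'' is true for all sufficiently small $d$ and false afterward; the threshold is exactly $\Delta(e_k)$, so $e_k$ closes a triangle using only edges of index at least $i$ precisely when $i<\Delta(e_k)$. Adding the requirement $k\le j$ that $e_k$ itself lie in the slice, the triad closure count of $G_{i,j}$ is the number of edges with $i<\Delta(e_k)<k\le j$. Mapping $e_k\mapsto(k,\Delta(e_k))$ and observing that $\Delta(e_k)<k$ holds for every edge (two edges cannot form a triangle, so $G_{k-1,k}$ never contains one, forcing $\Delta(e_k)\le k-1$), the side $i\le k$ becomes redundant and the count collapses to a single dominance query, exactly as in the proof of Lemma~\ref{lem:general-matroid}.

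The real work is the preprocessing, and here I would lean on the $h$-index partition $(H,L)$ with $|H|\le h$ and every vertex of $L$ of degree at most $h$, both computable in linear time, with $h=O(\sqrt m)$ as noted above. Processing edges in index order while maintaining $E$, $P$, and adjacency lists, I would set $\Delta(e_k)=M+1$, where $M$ is the maximum over common neighbors $w$ of the endpoints $u,v$ of $e_k$ of $\min\bigl(E[u,w],E[w,v]\bigr)$, using the most recent earlier occurrence on each side. Splitting this maximization by whether the apex $w$ lies in $H$ or in $L$ is what yields the $O(h)$ per-edge cost: for $w\in H$ there are at most $h$ candidates, each tested in $O(1)$ through $E$; for $w\in L$ the single best choice is read off from $P[u,v]$ in $O(1)$. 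The subsequent updates refresh $E$ and the adjacency lists in $O(1)$, and refresh $P$ by looping, for each endpoint of $e_k$ that lies in $L$, over that endpoint's at most $h$ neighbors.

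The step I expect to be the main obstacle is verifying the invariant that $P[u,v]$ always records the best low-apex two-edge path among the edges processed so far. I would establish \emph{completeness} by casing on which of the two edges $(u,w)$ and $(w,v)$ of a candidate path is inserted second: at that insertion the middle vertex $w$ is a low-degree endpoint whose neighbor loop ranges over the opposite endpoint, so the path is registered in the correct entry of $P$. \emph{Cheapness} then follows because each low-degree endpoint triggers at most $h$ updates, matching the $O(h)$ cost of the $H$-scan in the query, for $O(hm)$ total. Finally, the space is linear: $E$ and the adjacency lists store $O(1)$ information per edge encountered, $P$ is indexed only by vertex pairs that are themselves edges of $G$ (so it has $O(m)$ entries), and the resulting point set has one point per edge. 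Feeding these points into Lemma~\ref{lem:general-matroid} then completes the proof.
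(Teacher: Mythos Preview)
Your proposal is correct and follows essentially the same route as the paper: compute all the thresholds $\Delta(e_k)$ in $O(hm)$ time via the $H/L$ partition together with the tables $E$ and $P$, then map each edge to $(k,\Delta(e_k))$ and invoke the dominance-counting reduction of Lemma~\ref{lem:general-matroid}. Your write-up is considerably more detailed than the paper's (which simply points back to the preceding discussion), and your explicit completeness argument for the invariant on $P$---casing on which of the two edges of a low-apex path arrives last---is a nice addition that the paper leaves implicit.
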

\begin{proof}
We perform the preprocessing steps described above to compute $\Delta(e_k)$ for each edge $e_k$, in total time $O(hm)$, and then use the same persistent finger tree structure described in the matroid rank data structure (Lemma~\ref{lem:general-matroid}), using $\Delta$ in place of the similar index $\tau(e_k)$ of the matroid rank data structure.
\end{proof} 

\section{Conclusions}
We have described data structures for many counting problems on slices of relational event data. Our analysis separates preprocessing from queries,  but many of our data structures preprocess the data in sequence order, allowing queries to be interleaved with the addition of new data to the end of the sequence.

Many interesting social network parameters remain to be addressed, including the clustering coefficient, the $h$-index, the number of vertices reachable via non-monotonic paths, and the size of the largest connected component. In addition, several of the parameters for the statistics we compute (such as the hop count and  influential vertices in our influence-counting structure) must be determined at preprocessing time, and it would be of interest to develop more flexible structures that can delay the choice of these parameters until query time. Thus, although we have shown many interesting graph statistics to be computable efficiently in our model, much more remains to be done.

\newpage

\subsection*{Acknowledgements.}
This research was supported in part by the National Science Foundation under grant 0830403, and by the Office of Naval Research under MURI grant N00014-08-1-1015.

{\raggedright
\bibliographystyle{abuser}
\bibliography{win-rel-events}}

\begin{thebibliography}{10}

\bibitem{BeaMooSto-AJS-04}
P.~S. Bearman, J.~Moody, and K.~Stovel.
\newblock {Chains of affection: the structure of adolescent romantic and sexual
  networks}.
\newblock {\em Am. J. Sociol.} 110(1):44{--}91, 2004,
  \url{http://www.jstor.org/stable/10.1086/386272}.

\bibitem{Bra-JMS-01}
U.~Brandes.
\newblock {A faster algorithm for betweenness centrality}.
\newblock {\em J. Mathematical Sociology} 25(2):163{--}177, 2001,
  \href{http://dx.doi.org/10.1080/0022250X.2001.9990249}%
{doi:10.1080/0022250X.2001.9990249}.

\bibitem{BraLerSni-ASONAM-09}
U.~Brandes, J.~Lerner, and T.~A.~B. Snijders.
\newblock {Networks evolving step by step: statistical analysis of dyadic event
  data}.
\newblock {\em Int. Conf. Advances in Social Network Analysis and Mining},
  pp.~200{--}205, 2009, \href{http://dx.doi.org/10.1109/ASONAM.2009.28}%
{doi:10.1109/ASONAM.2009.28}.

\bibitem{BuiFerJar-WiOpt-03}
B.~Bui-Xuan, A.~Ferreira, and A.~Jarry.
\newblock {Evolving graphs and least cost journeys in dynamic networks}.
\newblock {\em Modeling and Optimization in Mobile, Ad-Hoc and Wireless
  Networks}, pp.~141{--}150, 2003.

\bibitem{But-SM-08}
C.~T. Butts.
\newblock {A relational event framework for social action}.
\newblock {\em Sociological Methodology} 38(1):155{--}200, 2008,
  \href{http://dx.doi.org/10.1111/j.1467-9531.2008.00203.x}%
{doi:10.1111/j.1467-9531.2008.00203.x}.

\bibitem{Cha-JACM-90}
B.~Chazelle.
\newblock {Lower bounds for orthogonal range searching: part II. The arithmetic
  model}.
\newblock {\em J. ACM} 37(3):439{--}463, 1990,
  \href{http://dx.doi.org/10.1145/79147.79149}%
{doi:10.1145/79147.79149}.

\bibitem{ChaGui-Algo-86}
B.~Chazelle and L.~J. Guibas.
\newblock {Fractional cascading: I. A data structuring technique}.
\newblock {\em Algorithmica} 1(1):133{--}162, 1986,
  \href{http://dx.doi.org/10.1007/BF01840440}%
{doi:10.1007/BF01840440}.

\bibitem{CorPreVol-JCGS-03}
C.~Cortes, D.~Pregibon, and C.~Volinsky.
\newblock {Computational methods for dynamic graphs}.
\newblock {\em J. Comput. Graph. Stat.} 12:950{--}970, 2003,
  \href{http://dx.doi.org/10.1198/1061860032742}%
{doi:10.1198/1061860032742}.

\bibitem{DeSinWon-STI-04}
P.~De, A.~E. Singh, T.~Wong, W.~Yacoub, and A.~M. Jolly.
\newblock {Sexual network analysis of a gonorrhoea outbreak}.
\newblock {\em Sexually transmitted infections} 80(4):280{--}5, August 2004,
  \href{http://dx.doi.org/10.1136/sti.2003.007187}%
{doi:10.1136/sti.2003.007187}.

\bibitem{DSPersistent}
J.~R. Driscoll, N.~Sarnak, D.~D. Sleator, and R.~E. Tarjan.
\newblock {Making data structures persistent}.
\newblock {\em J. Comput. Sys. Sci.} 38(1):86{--}124, 1989,
  \href{http://dx.doi.org/10.1016/0022-0000(89)90034-2}%
{doi:10.1016/0022-0000(89)90034-2}.

\bibitem{EppGalIta-ATCH-99}
D.~Eppstein, Z.~Galil, and G.~F. Italiano.
\newblock {Dynamic graph algorithms}.
\newblock {\em Algorithms and Theory of Computation Handbook}, chapter~8. CRC
  Press, 1999.

\bibitem{EppGooStr-COCOA-10}
D.~Eppstein, M.~T. Goodrich, D.~Strash, and L.~Trott.
\newblock {Extended $h$-index parameterized data structures for computing
  dynamic subgraph statistics}.
\newblock {\em 4th Int. Conf. Combinatorial Optimization and Applications},
  pp.~128{--}141. Springer, LNCS 6508, 2010,
  \href{http://dx.doi.org/10.1007/978-3-642-17458-2\_12}%
{doi:10.1007/978-3-642-17458-2\_12},
  \href{http://arxiv.org/abs/1009.0783}{arXiv:1009.0783}.

\bibitem{EppSpi-WADS-09}
D.~Eppstein and E.~S. Spiro.
\newblock {The $h$-index of a graph and its application to dynamic subgraph
  statistics}.
\newblock {\em Algorithms and Data Structures Symposium}, pp.~278{--}289.
  Springer, LNCS 5664, 2009,
  \href{http://dx.doi.org/10.1007/978-3-642-03367-4\_25}%
{doi:10.1007/978-3-642-03367-4\_25},
  \href{http://arxiv.org/abs/0904.3741}{arXiv:0904.3741}.

\bibitem{EppWan-SODA-01}
D.~Eppstein and J.~Y. Wang.
\newblock {Fast approximation of centrality}.
\newblock {\em 12th Symp. Discrete Algorithms}, pp.~228{--}229, 2001,
  \href{http://arxiv.org/abs/cs.DS/0009005}{arXiv:cs.DS/0009005}.

\bibitem{Fos-MM-47}
J.~E. Foster.
\newblock {A number system without a zero-symbol}.
\newblock {\em Mathematics Magazine} 21(1):39{--}41, 1947,
  \href{http://dx.doi.org/10.2307/3029479}%
{doi:10.2307/3029479}.

\bibitem{FreSak-STOC-89}
M.~L. Fredman and M.~Saks.
\newblock {The cell probe complexity of dynamic data structures}.
\newblock {\em 21st ACM Symp. Theory of Computing}, pp.~345{--}354, 1989,
  \href{http://dx.doi.org/10.1145/73007.73040}%
{doi:10.1145/73007.73040}.

\bibitem{FreWil-JCSS-93}
M.~L. Fredman and D.~E. Willard.
\newblock {Surpassing the information-theoretic bound with fusion trees}.
\newblock {\em J. Comput. Sys. Sci.} 47(3):424{--}436, 1993,
  \href{http://dx.doi.org/10.1016/0022-0000(93)90040-4}%
{doi:10.1016/0022-0000(93)90040-4}.

\bibitem{Diego04}
D.~Garlaschelli and M.~I. Loffredo.
\newblock {Patterns of link reciprocity in directed networks}.
\newblock {\em Phys. Rev. Lett.} 93:268701, 2004,
  \href{http://dx.doi.org/10.1103/PhysRevLett.93.268701}%
{doi:10.1103/PhysRevLett.93.268701}.

\bibitem{GolZheFie-FTML-09}
A.~Goldenberg, A.~X. Zheng, S.~E. Fienberg, and E.~M. Airoldi.
\newblock {A survey of statistical network models}.
\newblock {\em Foundations and Trends in Machine Learning} 2(2):129{--}233,
  2009, \href{http://dx.doi.org/10.1561/2200000005}%
{doi:10.1561/2200000005},
  \href{http://arxiv.org/abs/0912.5410}{arXiv:0912.5410}.

\bibitem{Gong11}
L.~Gong, C.~Teng, A.~Livne, C.~Brunetti, and L.~A. Adamic.
\newblock {Coevolution of network structure and content},
  \href{http://arxiv.org/abs/1107.5543}{arXiv:1107.5543}.
\newblock Online preprint, 2011.

\bibitem{Gra-AJS-73}
M.~S. Granovetter.
\newblock {The strength of weak ties}.
\newblock {\em Am. J. Sociol.} 78(6):1360{--}1380, 1973,
  \href{http://dx.doi.org/10.1086/225469}%
{doi:10.1086/225469}.

\bibitem{ColorRangeQueries}
P.~Gupta, R.~Janardan, and M.~Smid.
\newblock {Further results on generalized intersection searching problems:
  counting, reporting, and dynamization}.
\newblock {\em J. Algorithms} 19(2):282{--}317, 1995,
  \href{http://dx.doi.org/10.1006/jagm.1995.1038}%
{doi:10.1006/jagm.1995.1038}.

\bibitem{Hol-PRE-05}
P.~Holme.
\newblock {Network reachability of real-world contact sequences}.
\newblock {\em Phys. Rev. E} 71:046119, 2005,
  \href{http://arxiv.org/abs/cond-mat/0410313}{arXiv:cond-mat/0410313}.

\bibitem{JaJMorShi-ISAAC-04}
J.~JaJa, C.~W. Mortensen, and Q.~Shi.
\newblock {Space-efficient and fast algorithms for multidimensional dominance
  reporting and counting}.
\newblock {\em 15th Int. Symp. Algorithms and Computation}, pp.~558{--}568.
  Springer, LNCS 3341, 2005,
  \href{http://dx.doi.org/10.1007/978-3-540-30551-4\_49}%
{doi:10.1007/978-3-540-30551-4\_49}.

\bibitem{Kin-08}
S.~Kintali.
\newblock {Betweenness centrality: algorithms and lower bounds},
  \href{http://arxiv.org/abs/0809.1906}{arXiv:0809.1906}.
\newblock Online preprint, 2008.

\bibitem{KosWat-AJS-09}
G.~Kossinets and D.~J. Watts.
\newblock {Origins of homophily in an evolving social network}.
\newblock {\em Am. J. Sociol.} 115(2):405{--}450, 2009,
  \href{http://dx.doi.org/10.1086/599247}%
{doi:10.1086/599247}.

\bibitem{Lawler01}
E.~Lawler.
\newblock {\em {Combinatorial Optimization}}.
\newblock Dover Publications, Mineola, NY, 2001.

\bibitem{Matthews77}
L.~R. Matthews.
\newblock {Bicircular matroids}.
\newblock {\em Quart. J. Math.} 28(110):213{--}227, 1977.

\bibitem{Moo-SF-02}
J.~Moody.
\newblock {The importance of relationship timing for diffusion}.
\newblock {\em Social Forces} 81(1):25{--}56, 2002,
  \href{http://dx.doi.org/10.1353/sof.2002.0056}%
{doi:10.1353/sof.2002.0056}.

\bibitem{MooMcFBen-AJS-05}
J.~Moody, D.~McFarland, and S.~Bender-deMoll.
\newblock {Dynamic network visualization}.
\newblock {\em Am. J. Sociol.} 110(4):1206{--}1241, 2005,
  \href{http://dx.doi.org/10.1086/421509}%
{doi:10.1086/421509}.

\bibitem{Noo-SN-11}
W.~de~Nooy.
\newblock {Networks of action and events over time. A multilevel discrete-time
  event history model for longitudinal network data}.
\newblock {\em Social Networks} 33(1):31{--}40, 2011,
  \href{http://dx.doi.org/10.1016/j.socnet.2010.09.003}%
{doi:10.1016/j.socnet.2010.09.003}.

\bibitem{OkaCheLi-FAW-08}
K.~Okamoto, W.~Chen, and X.-Y. Li.
\newblock {Ranking of closeness centrality for large-scale social networks}.
\newblock {\em 2nd Int. Worksh. Frontiers in Algorithmics}, pp.~186{--}195.
  Springer, LNCS 5059, 2008,
  \href{http://dx.doi.org/10.1007/978-3-540-69311-6\_21}%
{doi:10.1007/978-3-540-69311-6\_21}.

\bibitem{Pat-STOC-07}
M.~P{\v{a}}tra{\c{s}}cu.
\newblock {Lower bounds for 2-dimensional range counting}.
\newblock {\em 39th ACM Symp. Theory of Computing}, pp.~40{--}46, 2007,
  \href{http://dx.doi.org/10.1145/1250790.1250797}%
{doi:10.1145/1250790.1250797}.

\bibitem{PotMutRot-STI-02}
J.~J. Potterat, S.~Q. Muth, R.~B. Rothenberg, H.~Zimmerman-Rogers, D.~L. Green,
  J.~E. Taylor, M.~S. Bonney, and H.~A. White.
\newblock {Sexual network structure as an indicator of epidemic phase}.
\newblock {\em Sexually transmitted infections} 78 Suppl 1:i152--8, April 2002.

\bibitem{PotPhiPlu-STI-02}
J.~J. Potterat, L.~Phillips-Plummer, S.~Q. Muth, R.~B. Rothenberg, D.~E.
  Woodhouse, T.~S. Maldonado-Long, H.~P. Zimmerman, and J.~B. Muth.
\newblock {Risk network structure in the early epidemic phase of HIV
  transmission in Colorado Springs}.
\newblock {\em Sexually transmitted infections} 78 Suppl 1:i159--63, April
  2002.

\bibitem{Rap-BMB-53}
A.~Rapoport.
\newblock {Spread of information through a population with socio-structural
  bias: I. Assumption of transitivity}.
\newblock {\em Bull. Math. Biophys.} 15(4):523{--}533, 1953,
  \href{http://dx.doi.org/10.1007/BF02476440}%
{doi:10.1007/BF02476440}.

\bibitem{Sei-SN-83}
S.~B. Seidman.
\newblock {Network structure and minimum degree}.
\newblock {\em Social Networks} 5(3):269{--}287, 1983,
  \href{http://dx.doi.org/10.1016/0378-8733(83)90028-X}%
{doi:10.1016/0378-8733(83)90028-X}.

\bibitem{LinkCutArt}
D.~D. Sleator and R.~E. Tarjan.
\newblock {A data structure for dynamic trees}.
\newblock {\em J. Comput. Sys. Sci.} 26(3):362{--}391, 1983,
  \href{http://dx.doi.org/10.1016/0022-0000(83)90006-5}%
{doi:10.1016/0022-0000(83)90006-5}.

\bibitem{LinkCutBook}
R.~E. Tarjan.
\newblock {\em {Data Structures and Network Algorithms}}.
\newblock Society for Industrial and Applied Mathematics, Philadephia, PA,
  1983, \href{http://dx.doi.org/10.1137/1.9781611970265}%
{doi:10.1137/1.9781611970265}.

\bibitem{Welsh10}
D.~J.~A. Welsh.
\newblock {\em {Matroid Theory}}.
\newblock Dover Publications, Mineola, NY, 2010.

\end{thebibliography}

\newpage
\appendix
\section{Window-sensitive dominance counting}

Suppose we are given as input a set $S$ of $n$ points, with integer coordinates in the range from $1$ to $n$; we wish to answer \emph{dominance counting queries}, where a query specifies a point $(x,y)$ and must count the number of points $(x',y')\in S$ with $x'\le x$ and $y'\le y$.
JaJa, Mortensen and Shi~\cite{JaJMorShi-ISAAC-04} provide a data structure for this problem, in the word RAM computation model, that uses linear space and achieves $O(\log n/\log\log n)$ query time. More precisely, they show (in their Lemma 5) that in a model of computation in which each word contains at least $\log n$ bits of information and in which tables of size $n$ may be precomputed,
then it is possible to represent sets of $m$ points in space $O(m)$ and achieve query time $O(\log m/\log\log n)$. (Gupta et~al.{} make an additional assumption, that the points of their data set have distinct coordinates, but this can be achieved with no loss of generality and with no change to their space or query time bounds by sorting the points by their coordinate values and replacing the coordinates by indices into the sorted order.)
Applying this structure directly to the point sets generated from our reductions would give us query time $O(\log m/\log\log m)$ and space $O(m)$, where $m$ is the number of edges in the given relational event graph. Instead, we show that it is possible to achieve slightly faster query time, $O(\log w/\log\log m)$, where $w$ is the number of edges in the query slice.

The key observations needed for this improvement are the following:
\begin{itemize}
\item
All of the points $(x_i,y_i)$ in the point sets generated by our reductions satisfy $x_i\ge y_i$; that is, they lie below the main diagonal $x=y$ of the $n\times n$ square forming the bounding box of the points.
In the matroid rank problems, we may interpret $x_i$ as being the index of each edge, and $y_i$ as being the number $\tau(e_i)$ which is always less than the index itself; similar observations apply to the other problems.
\item
Each query on a slice $G_{i,j}$ is translated to dominance queries determined by the point $(i,j)$. The number of edges in the slice, $j-i+1$, is proportional to the geometric distance $(j-i)\sqrt 2$ of this point from the main diagonal.
\end{itemize}

We may assume without loss of generality that the quadrant in which we wish to count points for a query $(i,j)$ is the quadrant $\{(x,y)\mid x\le i \wedge y\ge j\}$ that extends from the query point towards the main diagonal. It is not true that these are the only quadrants produced by our reductions from graph slice problems to dominance counting; however, the number of points in each of the other three quadrants may be easily computed by combining the number of points in this quadrant with halfspace range counting problems. The number of points in an axis-aligned halfspace can be determined trivially in linear space and constant time per query by precomputing the answer to each possible query halfspace.
Thus, it remains to show that, given any set of points below the main diagonal of the square, we can answer dominance counting problems for quadrants that point towards the main diagonal, in an amount of time per query that is a function of the distance from the diagonal.

To solve dominance counting problems on a given set of points, satisfying the assumptions, we partition the points into subsets, where subset $S_i$ contains the points whose distance from the main diagonal is at most $(\log n)^{2^i}$ and which are not in any set $S_{i'}$ for $i'<i$.
Then, in outline, we use a local coordinate system for each subset $S_i$ in which the number of distinct coordinates is proportional to the number of points in $S_i$ (allowing the data structure of JaJa et al. to be used in a space-efficient way) and we cover each subset $S_i$ by data structures that each serve a range of $O((\log n)^{2^i})$ coordinates, in such a way that each point is covered by at most two data structures; again, this achieves linear space, while allowing a query within $S_i$ to be performed quickly. Finally, we use \emph{fractional cascading}~\cite{ChaGui-Algo-86} to link each subset $S_i$ to the next subset $S_j$, allowing the transformation into the local coordinate systems to be performed quickly and allowing us to quickly find the subset $S_i$ in which it is most appropriate to perform the query, reducing all lower-level queries to constant-time halfspace counting queries.

In more detail, we store the following for each subset $S_i$:
\begin{itemize}
\item Lists of the points in $S_i$, sorted both by their $x$-coordinates and by their $y$-coordinates.
\item For each point in $S_i$, its indices in both sorted lists, allowing us to answer in constant time a halfspace counting query with the coordinate of that point.
\item Two lists $X_i$ and $Y_i$, consisting both of points in $S_i$ and of some points in $S_{j}$ for $j>i$, sorted by their $x$-coordinates and $y$-coordinates respectively. $X_i$ consists of $S_i$ together with the elements at even positions in $X_{i+1}$, and similarly $Y_i$ consists of $S_i$ together with the elements at even positions in $Y_{i+1}$. Each entry in $X_i$ or $Y_i$ contains pointers to the nearest point in the sorted list for $S_i$ and to the nearest point in $X_{i+1}$ or $Y_{i+1}$. In this way, starting from $S_0$, we can navigate from $S_i$ to $S_{i+1}$ in constant time.
\item For each point in $S_i$, a translation of its coordinates into the local coordinate system of $S_i$, obtained by compressing out coordinate values that occur neither as the $x$-coordinate nor as the $y$-coordinate of any point in $S_i$. In this compressed coordinate system, all points remain below the main diagonal, and the number of distinct coordinates is at most equal to the number of points.
\item A sequence of the data structures of JaJa et al., each covering (for some integer $k$) the subset of points in $S_i$ whose local coordinates have $y\ge k(\log n)^{2^i}$ and $x\le(k+2)(\log n)^{2^i}$. Thus, there are at most $2(\log n)^{2^i}$ distinct $x$- and $y$-coordinates within one of these structures, so their query time is
$$
\log\left( 2(\log n)^{2^i} \right) / \log\log n = O(2^i).
$$
Any query defined by a point $(i,j)$ with $j-i\le (\log n)^{2^i}$ may be handled by one of these structures, determined in constant time by dividing the query coordinates by $(\log n)^{2^i}$. Each point of $S_i$ belongs to two of these structures, so the total space for all of these structures is $O(|S_i|)$.
\end{itemize}
In addition, we store an array indexed by coordinate, mapping coordinates in the coordinate space of the whole point set to their positions in lists $X_0$ and $Y_0$.

\begin{theorem}
Given a set of $O(n)$ points below the main diagonal in an $n\times n$ integer grid, we can process them into a data structure of size $O(n)$ that handles dominance queries for which the query point is at distance $d$ from the main diagonal in time $O(\log d/\log\log n)$ per query.
\end{theorem}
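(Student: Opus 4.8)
The plan is to verify the three assertions the statement packs together—correctness, the $O(n)$ space bound, and the $O(\log d/\log\log n)$ query bound—for the structure already assembled above, since all of its components have been described. I will work in coordinate units, writing $\delta_\ell=\Theta((\log n)^{2^\ell})$ for the coordinate half-width of the band $0\le x-y\le\delta_\ell$ occupied by $S_\ell$, and I will use the same constant to write the query offset as $b-a=\Theta(d)$ (the Euclidean-vs-coordinate factor $\sqrt2$ is harmless and absorbed). By the reduction described before the theorem I may assume the query quadrant points toward the diagonal, so a query on $(b,a)$ with $b\ge a$ asks for the number of stored points $(x,y)$ with $x\le b$ and $y\ge a$, and I choose $\ell$ to be the smallest index with $b-a\le\delta_\ell$.

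First I would describe the query algorithm. Read $\ell$ off from $d$ in $O(1)$. Using the global coordinate array, locate $b$ in $X_0$ and $a$ in $Y_0$ in constant time, then follow the fractional-cascading pointers to obtain, for each level $\ell'=0,1,\dots,\ell$, the rank of $b$ among the $x$-coordinates of $S_{\ell'}$ and the rank of $a$ among its $y$-coordinates; each hop from level $\ell'$ to $\ell'+1$ costs $O(1)$, so reaching level $\ell$ costs $O(\ell)$. At the top level I perform a genuine dominance query: since $b-a\le\delta_\ell$, dividing the coordinates by $\delta_\ell$ names a single JaJa-et-al.\ window structure that contains every point of $S_\ell$ in the relevant range, and the ranks already computed supply the local coordinates, so this query costs $O(2^\ell)$.

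The crux—and the step I expect to carry the real content—is showing that each \emph{lower} level $\ell'<\ell$ contributes only two halfspace counts that the cascaded ranks already provide. For such a level $\delta_{\ell'}\le\delta_{\ell-1}<b-a$, so the band of $S_{\ell'}$ is strictly thinner than the query offset. For a point of $S_{\ell'}$, membership in the quadrant means exactly $a\le y$ and $x\le b$, since $y\le x$ forces $a\le y\le x\le b$. The decisive observation is that any band point with $y<a$ automatically satisfies $x<y+\delta_{\ell'}<a+\delta_{\ell'}<a+(b-a)=b$; hence the number of points with $x\le b$ and $y<a$ equals the number with $y<a$ alone. Consequently the quadrant count in $S_{\ell'}$ equals (number with $x\le b$) minus (number with $y<a$), a difference of two prefix counts the ranks deliver in $O(1)$. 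For correctness I must also confirm that no dominated point is missed: a dominated point has $a\le y\le x\le b$, so $x-y\le b-a\le\delta_\ell$, placing it within the band of some level $\ell'\le\ell$; the window captures those in $S_\ell$ and the halfspace differences capture those below, with no omission or double counting.

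Finally I would bound the resources. The query time is $O(\ell)+O(2^\ell)$; from $\delta_{\ell-1}<b-a=\Theta(d)$ we get $2^{\ell-1}\log\log n<\log d+O(1)$, so $2^\ell=O(\log d/\log\log n)$ while $\ell=O(\log 2^\ell)=o(2^\ell)$, yielding the claimed $O(\log d/\log\log n)$. For space, the sorted lists, rank arrays, and local-coordinate tables each cost $O(|S_\ell|)$; the window structures cost $O(|S_\ell|)$ in total since each point of $S_\ell$ lies in exactly two of them; and the cascading lists satisfy $|X_\ell|=|S_\ell|+|X_{\ell+1}|/2$ (likewise for $Y_\ell$), whose telescoping sum is $O(\sum_\ell|S_\ell|)=O(n)$. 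With the single global array of size $O(n)$, the total is $O(n)$, completing the proof.
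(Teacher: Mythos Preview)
Your proposal is correct and follows essentially the same approach as the paper: walk from $S_0$ up through the levels via fractional cascading, answer the lower levels by two halfspace counts each, and answer the final level $S_\ell$ with a single JaJa--Mortensen--Shi query in local coordinates. Your write-up in fact supplies more detail than the paper's own proof---in particular, you make explicit why the quadrant count in each lower band collapses to a difference of two prefix counts (via the observation that $x>b$ and $y<a$ together force $x-y>\delta_{\ell'}$, impossible in $S_{\ell'}$), and you spell out the telescoping for the cascaded lists and the derivation of $2^\ell=O(\log d/\log\log n)$; the paper states these conclusions without the intermediate steps.
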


\begin{proof}
All of the data structures described above take space $O(|S_i|)$ for each set $S_i$, so the total space is linear.

To answer a query, we start in $S_0$. Within each set $S_i$ for which the query quadrant extends beyond the distance of the set from the main diagonal and therefore could also contain points of $S_{i+1}$, we translate the query into two halfspace queries, answer these queries in constant time, and use the $X_i$ and $Y_i$ structure to progress to the next set $S_{i+1}$ in constant time. In the final set $S_i$, we translate the query into the local coordinate system and then use one of the data structures of JaJa et al.{} stored for this set to answer the query directly in time $O(2^i)$. This $O(2^i)$ time dominates the query (everything else is $O(i)$) and thus the time per query is $O(2^i)=O(\log d/\log\log n)$.
\end{proof}

When translated to our relational event graph problems, this gives query time bounds of the form $O(\log(j-i)/\log\log m)$ for querying slice $G_{i,j}$ of a relational event graph with $m$ edges.

We observe that the same improvement may also be applied to the one-dimensional colored range counting problem considered by Gupta et al.~\cite{ColorRangeQueries}: as in our results, Gupta et al.{} transform the given input into a range counting problem on a set of two-dimensional points below the main diagonal. They use three-sided range queries rather than dominance counting, but their queries may be replaced by a linear combination of two axis-aligned halfspace queries and a dominance query. And, as in our problems, the length of the query interval for colored range counting translates into the distance of the dominance query point from the main diagonal.

\section{Simplified dominance counting}
Our data structure for range searching uses fractional cascading layered on top of
multiple copies of the structure of JaJa, Mortensen and Shi~\cite{JaJMorShi-ISAAC-04}, which itself is quite complex and in turn relies on the fusion trees of Fredman and Willard~\cite{FreWil-JCSS-93}, which are also complex. Therefore, although it achieves a good asymptotic space and query time complexity, we do not expect this combination of methods to be easy to implement. In this section we outline an alternative data structure for the same dominance counting problems that we expect to be more practical, although its time and space bounds are larger and we have not tested its practicality. Additionally, compared to the data structure in the previous appendix, the structure we define in this appendix has the theoretical advantage that it can handle queries with weighted points (dominance sum queries) and not just queries with unweighted points (dominance counting queries).

\begin{figure}[t]
\centering\includegraphics[width=4in]{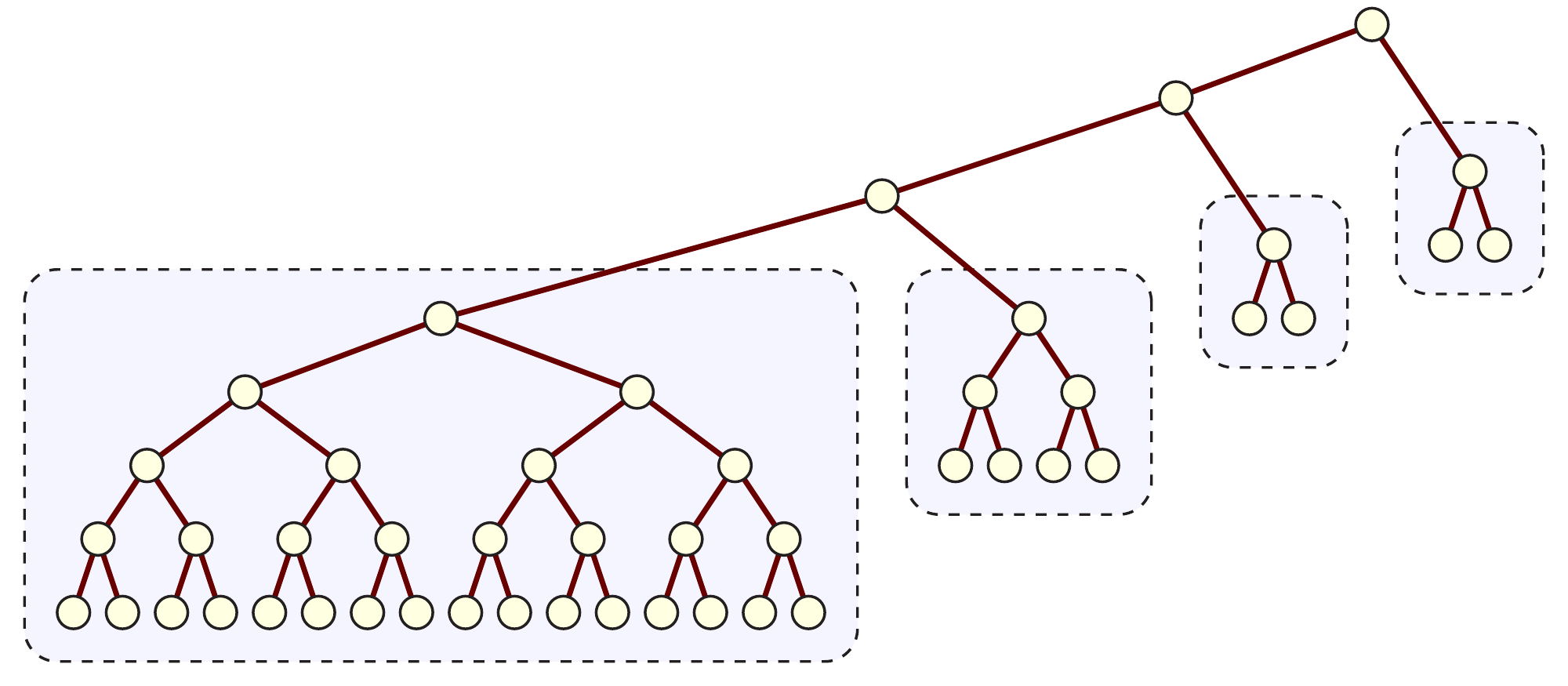}
\caption{24-leaf binary tree formed from the zeroless binary representation $24_{10}=2112_2$. Each of the four shaded complete binary subtrees corresponds to one of the four digits of the binary representation, in left-to-right order.}
\label{fig:foster-tree-24}
\end{figure}

In outline, our data structure for this problem uses path-copying persistence~\cite{DSPersistent} applied to a form of balanced binary tree, optimized for queries on small slices.
The specific trees we use are based on the observation that every positive integer has a unique representation as a base-2 number in which each digit is either 1 or 2 (rather than the more traditional binary notation in which each digit is either 0 or 1).\footnote{For an analogous representation of positive integers in base~10 using digits with values from 1 to 10, without a zero digit, see Foster~\cite{Fos-MM-47}.} For instance,
$$24_{10}=2112_2=2\times 2^3+1\times 2^2+1\times 2^1+2\times 2^0.$$
Based on this fact, for every $n$ we can form a tree $T_n$ with exactly $n$ leaves and $n-1$ internal vertices: we represent $n$ as $n=\sum_{i=0}^k b_i 2^i$ where each $b_i\in\{1,2\}$ and $k-1$ is the number of digits in the representation of $n$. We form a tree starting from a path of $k$ nodes, extending leftwards from the root; the right child of the node in this path at distance $i$ from the root is a complete binary tree with $b_i2^i$ leaves, and the left child of the last node in this path (at distance $k-1$ from the root) is a complete binary tree with $b_k2^k$ leaves. Figure~\ref{fig:foster-tree-24} illustrates this construction for $n=24$.

In $T_n$, the path from the root to the $i$th leaf (in the left-to-right ordering of the leaves) has length $O(\log(n-i))$: it takes at most $\log_2(n-i)$ steps to reach the complete binary subtree containing the $i$th leaf, and another $\log_2(n-i)+O(1)$ steps to reach the leaf from the root of this subtree. In addition, the structural change needed to form $T_{n+1}$ from $T_n$ is small: the binary representation of $n+1$ may be obtained from the representation of $n$ by changing trailing 2's to 1's and incrementing the lowest order digit that is not a 2, and each of these operations corresponds to $O(1)$ changes to the structure of the tree. So, in the worst case, $T_n$ and $T_{n+1}$ differ in the connections of $O(\log n)$ of their nodes, and the average change per step in constructing $T_n$ from $T_1$ by a sequence of these increment steps is $O(1)$. In particular, $T(n)$ can be constructed in time $O(n)$.

We now describe how to use these trees to solve dominance range sum queries. We assume we are given as input a set of $n$ points $(x_i,y_i)$, each with a weight~$w_i$. As in the previous section, we assume that $0\le y_i\le x_i<n$, so all points are on or below the main diagonal of the $n\times n$ integer grid. We wish to handle queries that are given as arguments a pair of coordinates $(x,y)$ and that return the query value
$$Q(x,y)=\sum\{w_i\mid x_i< x\wedge y_i> y\}.$$
That is, we sum the weights of the points in the quadrant of the plane directed towards the main diagonal from the query point.

To do so, for each value of $x$ in the range from $1$ to $n-1$ we store a tree $T_x$ with the structure described above, with exactly $x$ leaves. We represent each interior node of this tree as an object~$x$, with four instance variables: a weight $x.w$, a count $x.c$, and left and right child pointers $x.l$ and $x.r$. We do not explicitly represent the leaf nodes of the tree, but they are useful for defining its structure. The count variable for each node stores the number of leaves in the right subtree beneath that node; it is zero for nodes that are themselves leaves. Although leaves are not represented explicitly within our structure, we nevertheless define the weight of the leaf in position $i$ (in the left to right order of the leaves, starting from position~0 for the leftmost leaf) to be
$$\sum\{w_i\mid x_i<x\wedge y_i=y\}.$$
That is, it is the sum of weights of points within row $y$ of the $n\times n$ grid, up to column $x$.
The weight of a node that is not a leaf is the sum of the weights of the leaves in its right subtree.

\begin{figure}[t]
\centering\includegraphics[width=4.5in]{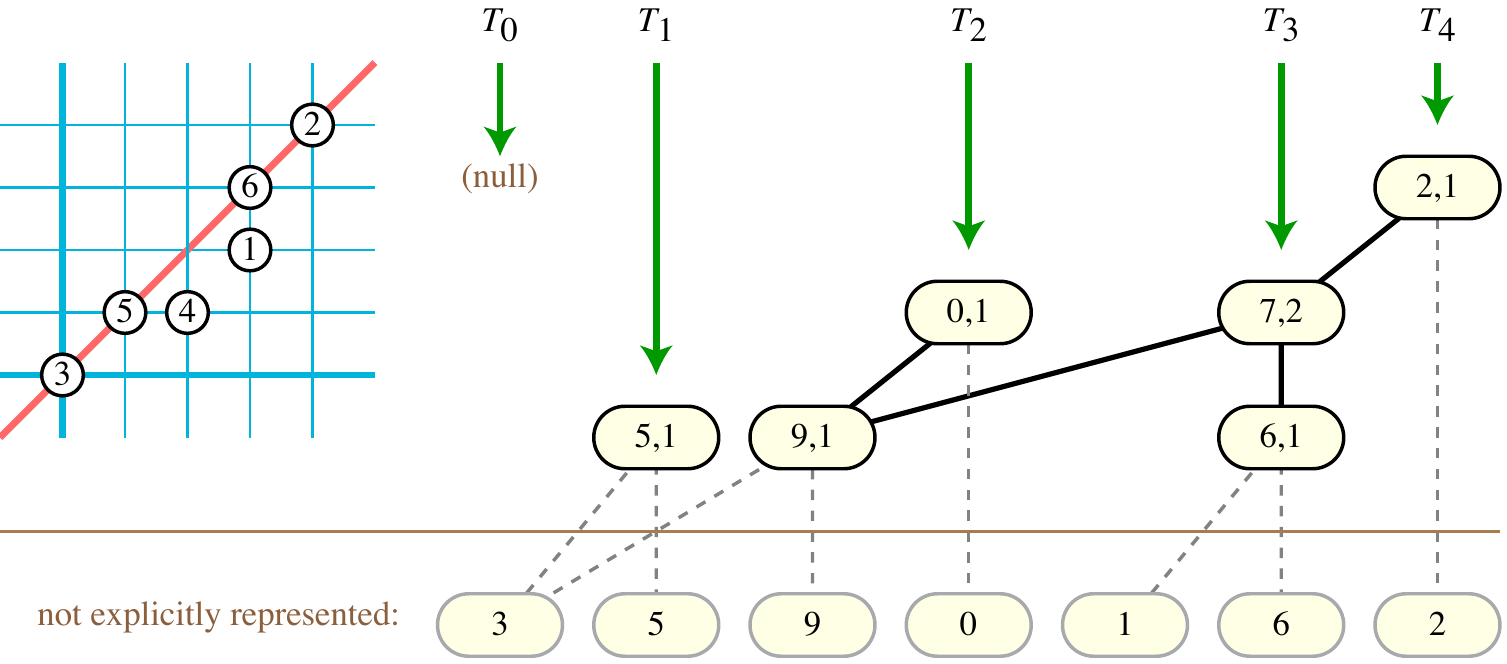}
\caption{The trees $T_n$ for $0\le n<5$ derived from a $5\times 5$ grid, and the pairs $x.w,x.c$ stored with each internal node of each tree. The leaf nodes are shown in the figure but not explicitly represented.}
\label{fig:shared-trees}
\end{figure}

To save space, we make the trees $T_x$ for different values of $x$ share as much of their structure as they can. In particular, if trees $T_x$ and $T_{x+1}$ both contain nodes whose descendants form isomorphic subtrees, with leaves in the same positions in the left-to-right order and with the same leaf weights, then our data structure reuses the same node object for both of them. However, the parts of $T_x$ and $T_{x+1}$ that differ either structurally or in the weights stored in those nodes are represented in the data structure by separate nodes. Finally, for each $x$ we store a pointer to the root of $T_x$ and we store a number $W_x$, the total weight of all the leaves in $T_x$. Figure~\ref{fig:shared-trees} illustrates this structure of shared trees, weights, and counts for a set of points within a $5\times 5$ grid.

To answer a query $(x,y)$, we perform a binary search for $y$ in tree $T_x$, using the count values stored in each tree node to guide whether to step leftwards or rightwards at each point of the search. The query value is then the sum of the weight values of the nodes at which this search stepped leftwards. As a special case, the query $(x,-1)$ (which asks for the sum of weights of all points with $x_i<x$) is handled by returning $W_x$. Thus, each query can be answered in time $O(\log(x-y))$.

To construct $T_x$ from $T_{x-1}$, we perform the structural rearrangements needed to form $T_x$ (creating new node objects for the root of each subtree in $T_x$ that does not also appear as a subtree in $T_{x-1}$. Then, for each point $(x_i,y_i)$ with $x_i=x-1$, we add $w_i$ to the weight value of leaf $y_i$ in $T_x$, and update the cumulative weights stored at each ancestor of this leaf, creating new copies of each ancestor node in order to be able to store these updated weight values in $T_x$ without disturbing the values already computed for $T_{x-1}$. The total number of new nodes that need to be created in this step for each point $(x_i,y_i)$ is $O(\log(x_i-y_i))$.
Thus, the total number of new nodes needed to create the entire structure, which gives the space requirement for the structure as well as its construction time, is $O(n+\sum_i \log(x_i-y_i))$.

We have proved the following result:

\begin{theorem}
Suppose we are given $m$ weighted points below the main diagonal in an $n\times n$ grid. Then in time $O(n+m\log n)$ we may preprocess these points into a data structure of size $O(n+m\log n)$ that supports dominance sum queries, given by a query point $(x,y)$, in time $O(\log(x-y))$ per query.
\end{theorem}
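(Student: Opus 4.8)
The plan is to collect three separate claims---correctness of the query, the per-query time bound, and the combined space and preprocessing bound---each of which follows from properties of the trees $T_x$ that have already been established above.

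First I would verify correctness. Recall that leaf $p$ of $T_x$ (in left-to-right order) carries the aggregated weight $L_p=\sum\{w_i\mid x_i<x\wedge y_i=p\}$, and that each internal node stores in its field $x.w$ the sum of the weights of the leaves in its right subtree. A dominance sum query asks for $Q(x,y)=\sum_{p>y}L_p$, a suffix sum of leaf weights. I would argue that the binary search for $y$ in $T_x$ decomposes this suffix into disjoint right subtrees: whenever the search steps left at a node, the target position $y$ lies in that node's left subtree, so every leaf of its right subtree has position strictly greater than $y$ and must be counted; and conversely, whenever the search steps right, the entire left subtree lies at positions strictly below $y$ and is correctly skipped. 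Since the right subtrees encountered on left-steps partition exactly the set of leaf positions exceeding $y$, summing the stored weights $x.w$ over all left-steps yields $Q(x,y)$, and the boundary case $(x,-1)$ is handled directly by the stored total $W_x$.

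Next I would bound the query time, which reduces to the depth at which the search terminates. The search for $y$ ends at the leaf in position within $O(1)$ of $y$ in a tree with $x$ leaves, and by the structural property established for $T_n$ the path from the root to the leaf in position $i$ has length $O(\log(n-i))$. Taking $n=x$ and $i=\Theta(y)$ gives depth $O(\log(x-y))$, so since the search visits each node on this path once, the query time is $O(\log(x-y))$. I would then assemble the space and construction bounds. The trees $T_1,\dots,T_{n-1}$ are built incrementally using path-copying persistence, reusing shared subtrees between consecutive trees. The structural rearrangements alone cost $O(1)$ amortized per increment, hence $O(n)$ in total, while for each input point $(x_i,y_i)$ the weight update propagates from leaf $y_i$ up to the root of $T_{x_i+1}$, copying $O(\log(x_i-y_i))$ ancestors. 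Summing over all points gives $\sum_i\log(x_i-y_i)\le m\log n$ newly created nodes, so the total node count---which is simultaneously the space and the construction time---is $O(n+m\log n)$, as claimed.

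I expect the main subtlety to lie in this last step: confirming that layering path-copying persistence on top of the incremental rebuilding of $T_x$ does not inflate the structural cost beyond $O(n)$. The crucial point is that the digit changes taking $T_x$ to $T_{x+1}$ (turning trailing $2$'s into $1$'s and incrementing the next digit) touch only subtrees near the root, so the amortized number of copied structural nodes per increment remains $O(1)$ despite the persistence; the weight-update copying is accounted for separately by the per-point $O(\log(x_i-y_i))$ term, and the two contributions combine to the stated bound.
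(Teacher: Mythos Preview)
Your proposal is correct and follows essentially the same approach as the paper: the paper's ``proof'' is really the construction described in the paragraphs preceding the theorem statement, and you have faithfully recapitulated its three components (correctness of the suffix-sum via left-step weights, the $O(\log(x-y))$ depth bound from the zeroless-binary tree shape, and the $O(n+\sum_i\log(x_i-y_i))$ node count from amortized structural changes plus per-point path copying). Your treatment is, if anything, slightly more explicit than the paper's---in particular you spell out the correctness invariant for the binary search and flag the persistence-versus-amortization interaction that the paper leaves implicit.
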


\section{Lower bounds}
\label{sec:lb}

Strengthening earlier results of Chazelle~\cite{Cha-JACM-90},
P{\v a}tra{\c s}cu provided lower bounds for two-dimensional range counting~\cite{Pat-STOC-07} that we adapt to our windowed relational event problems. Specifically, he showed that, for $n$ given points in the Euclidean plane, it is hard to answer \emph{dominance queries}, asking for the number of given points $(x_i,y_i)$ with $x_i\le X$ and $y_i\le Y$ for some query pair $(X,Y)$. In the cell probe model of Fredman and Saks~\cite{FreSak-STOC-89}, with $O(\log n)$ bits per machine word, every data structure that can answering such queries using space $O(n\log^{O(1)} n)$ requires $\Omega(\log n/\log\log n)$ time per query.

We adapt this lower bound to our windowed relational event problems, by showing how to translate a given set of $n$ Euclidean points into a synthetic relational event data set in such a way that windowed queries into this data set simulate range counting queries. To do so, we construct a (static) 1-regular graph with $2n$ vertices and $n$ isolated edges $e_i$. We may assume without loss of generality (by perturbing the Euclidean points if necessary) that no two points have the same $x$- or $y$-coordinate as each other; since only the ordering of the points by their coordinates matters for handling dominance queries, we may also assume (as P{\v a}tra{\c s}cu does) that their coordinates are all integers in the range from $0$ to $n-1$. That is, we are assuming that there exists a permutation $\pi$ of the integers from $0$ to $n-1$ such that the points in the given set of points all have coordinates of the form $(i,\pi(i))$.

Given a point set in this form, we define a relational event data set with $2n$ events, where for each $i$ in the range from $0$ to $n-1$, we include two copies of edge $e_i$ in the data set, one at time $n-i-1$ and a second copy at time $n+\pi(i)$. In this way, the number of given points dominated by the query pair $(X,Y)$ will exactly equal the number of repeated edges in the slice $G_{n-X-1,n+Y}$.

\begin{theorem}
\label{thm:lb}
For each of the problems of counting components, counting loopy components, counting isolated vertices, counting isolated edges, and counting repeated edges, any data structure for a relational event graph with $m$ edges and space $O(m\log^{O(1)} m)$ requires $\Omega(\log m/\log\log m)$ time per query.
\end{theorem}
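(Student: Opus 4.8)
The plan is to reduce P\v{a}tra\c{s}cu's dominance counting problem to each of the five counting problems in turn, using the relational event graph constructed in the paragraph preceding the theorem. Recall that this construction takes $n$ points $(i,\pi(i))$ (with $\pi$ a permutation of $0,\dots,n-1$) and produces a relational event graph with $2n$ vertices and $2n$ events: for each $i$, two copies of the isolated edge $e_i$, placed at times $n-i-1$ and $n+\pi(i)$. First I would pin down exactly which copies of $e_i$ fall inside a slice $G_{n-X-1,\,n+Y}$: the first copy is present precisely when $i\le X$, and the second precisely when $\pi(i)\le Y$. Thus each index $i$ falls into one of four cases — both copies, first only, second only, or neither — according to the quadrant into which the point $(i,\pi(i))$ falls relative to $(X,Y)$. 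Writing $R$, $S$, $Z$ for the number of indices with both copies, exactly one copy, and neither copy present, the dominance count is exactly $R$, the number of indices $i$ with $i\le X$ and $\pi(i)\le Y$, and $R+S+Z=n$.

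Next I would read off, for each of the five problems, the value returned by a single query on the slice $G_{n-X-1,\,n+Y}$, by classifying the slice multigraph component by component. An index with both copies present contributes a two-vertex component carrying a double (parallel) edge, hence a loopy component; an index with one copy contributes an isolated edge (a two-vertex tree component); an index with neither copy contributes two isolated vertices. This gives immediately: repeated edges $=R$; loopy components $=R$; isolated edges $=S$; isolated vertices $=2Z$; and total components $=R+S+2Z=n+Z$. The key observation is that the remaining auxiliary counts are fixed by the query parameters: since $\pi$ is a permutation, exactly $X+1$ indices satisfy $i\le X$ and exactly $Y+1$ satisfy $\pi(i)\le Y$, so that $S=(X+1-R)+(Y+1-R)=X+Y+2-2R$ and $Z=n-R-S=n+R-X-Y-2$.

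Substituting these relations, every one of the five query answers becomes an affine function of the dominance count $R$ whose coefficients depend only on the known quantities $X$, $Y$, and $n$: repeated edges and loopy components return $R$ outright, isolated edges return $X+Y+2-2R$, isolated vertices return $2(n+R-X-Y-2)$, and total components return $2n+R-X-Y-2$. Hence from a single query to any one of these data structures, followed by $O(1)$ word operations, I can recover $R$ exactly and thereby answer an arbitrary dominance query on the original $n$ points.

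Finally I would assemble the lower bound. The constructed instance has $m=2n$ edges, so a data structure using space $O(m\log^{O(1)}m)=O(n\log^{O(1)}n)$ with query time $T$ yields a dominance counting structure of the same asymptotic space and query time $O(T)$; by P\v{a}tra\c{s}cu's bound this forces $T=\Omega(\log n/\log\log n)=\Omega(\log m/\log\log m)$, as claimed. I expect the only real work to be the four-case analysis and the resulting affine relations; the point that must be gotten right is that the auxiliary counts $X+1$ and $Y+1$ depend only on the query and not on the hidden permutation, which is precisely what keeps each reduction down to a single query plus constant-time arithmetic.
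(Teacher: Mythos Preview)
Your proposal is correct and follows the same approach as the paper: both use the paper's construction and observe that each of the five query values on the slice $G_{n-X-1,n+Y}$ is an affine function of the dominance count $R$ with coefficients depending only on quantities known at query time (the paper phrases this as combining the answer with the trivially computed number of edges $X+Y+2$ in the slice). Your write-up is in fact more explicit than the paper's, carrying out the case analysis and the affine relations that the paper leaves implicit.
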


\begin{proof}
For the data set produced by our translation, the answer to any one of these queries can be combined with the (trivially calculated) number of edges in a slice to give the number of repeated edges within a slice. Using the translation described above, this could then be used to answer two-dimensional range counting queries in the same asymptotic query bound. Since range counting queries cannot be answered more quickly than the query time stated in the theorem, neither can these graph queries.
\end{proof}

\begin{figure}[t]
\centering
\includegraphics[scale=0.55]{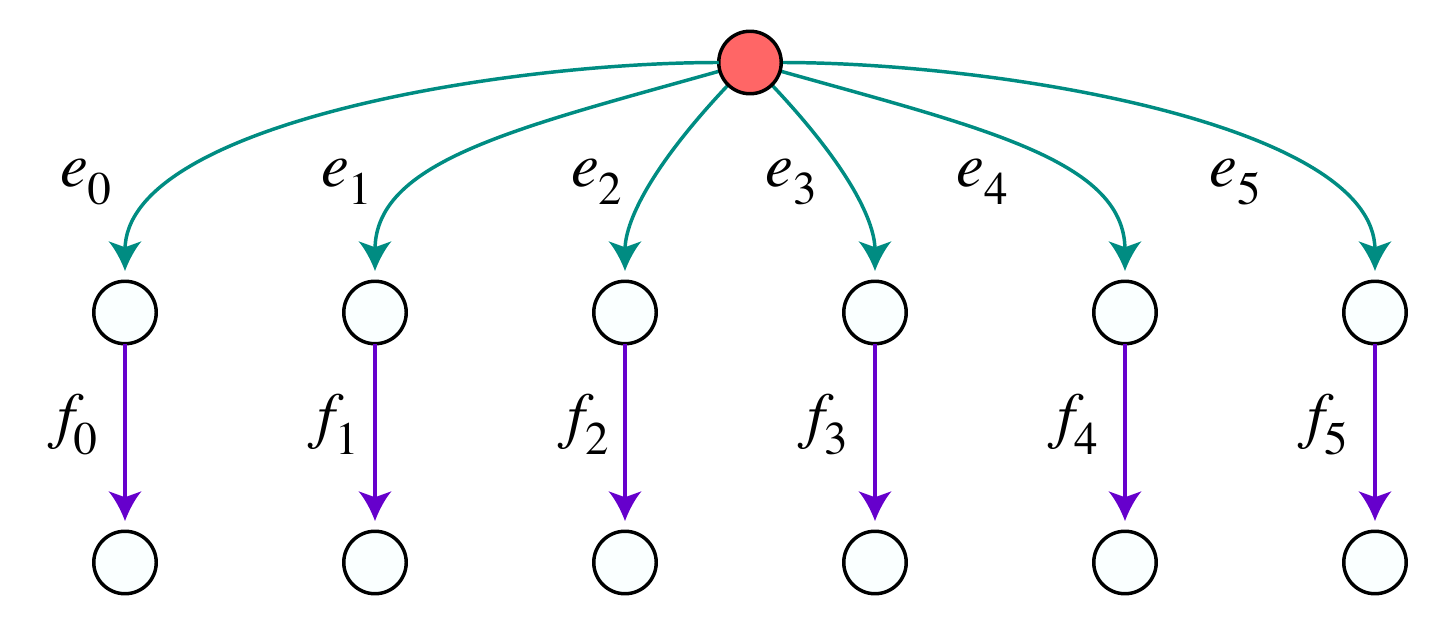}
\caption{Lower bound example for influenced vertices. The red vertex is influential; each remaining vertex is influenced only when the path leading to it is included in the slice.}
\label{fig:influence-lower-bound}
\end{figure}

A similar construction using a different relational event data set in the form of a tree of height two, using two-edge paths in place of the pairs of equal edges, shows that the same lower bound also holds for counting influenced vertices. Figure~\ref{fig:influence-lower-bound} shows the construction: for each pair of edges $(e_i,f_i)$, the endpoint of $e_i$ will be influenced whenever $e_i$ lies in the query interval, but the endpoint of $f_i$ will be influenced only if both $e_i$ and $f_i$ both lie in the query interval. Thus, as above, we can translate a two-dimensional range counting instance (represented as a permutation $\pi$ of the numbers from $0$ to $n-1$ by including a edge $e_i$ at time $n-i$ in a relational event data set and by including edge $f_i$ at time $n+\pi(i)$. We set the root of the tree as the sole influential vertex.

\begin{theorem}
\label{thm:lb2}
Any data structure for counting influenced vertices in a relational event graph with $m$ edges and space $O(m\log^{O(1)} m)$ requires $\Omega(\log m/\log\log m)$ time per query.
\end{theorem}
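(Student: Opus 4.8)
The plan is to mirror the reduction used for Theorem~\ref{thm:lb}, replacing the gadget of repeated isolated edges with the height-two tree gadget described above (Figure~\ref{fig:influence-lower-bound}), and then to argue that counting influenced vertices in a suitable slice recovers, up to query-trivial additive terms, the number of points dominated by a range-counting query. Concretely, I would start from P{\v a}tra{\c s}cu's instance~\cite{Pat-STOC-07}: a permutation $\pi$ of $\{0,\dots,n-1\}$ encoding points $(i,\pi(i))$, for which dominance counting requires $\Omega(\log n/\log\log n)$ time under space $O(n\log^{O(1)}n)$. Let $r$ denote the root. As in the preceding construction I would place the two edges of the $i$th path at distinct time indices, say $e_i$ at index $n-i-1$ (joining $r$ to the middle vertex $a_i$) and $f_i$ at index $n+\pi(i)$ (joining $a_i$ to the leaf $b_i$); this keeps all $2n$ events at distinct indices in $\{0,\dots,2n-1\}$ (a one-step offset that the text's ``$n-i$'' omits), and it guarantees that the unique root-to-leaf path $r\to a_i\to b_i$ traverses $e_i$ before $f_i$ and is therefore index-increasing, so $b_i$ is influenced in a slice exactly when both $e_i$ and $f_i$ lie in that slice.

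The key computation is to evaluate the number of influenced vertices of the slice $G_{n-X-1,\,n+Y}$. I would partition the vertices into the root, the middle vertices $a_i$, and the leaves $b_i$, and account for each class separately. The root contributes a fixed constant $c_0\in\{0,1\}$ (depending only on the convention of whether an influential vertex counts as influenced), independent of the query. The vertex $a_i$ is influenced precisely when $e_i$ lies in the slice; the upper-endpoint condition $n-i-1\le n+Y$ holds automatically, so $a_i$ is influenced iff $i\le X$, giving a count of exactly $X+1$ --- a one-dimensional (halfspace) quantity computable from the query in $O(1)$ time. Finally $b_i$ is influenced iff $e_i$ and $f_i$ are both present, i.e.\ iff $n-X-1\le n-i-1$ and $n+\pi(i)\le n+Y$, which simplifies to $i\le X$ and $\pi(i)\le Y$; thus the number of influenced leaves equals exactly the number of points $(i,\pi(i))$ dominated by $(X,Y)$.

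Combining these, the number of influenced vertices in $G_{n-X-1,\,n+Y}$ equals $c_0+(X+1)$ plus the dominance count, so the dominance count is recovered by subtracting the two query-trivial terms. Hence any structure counting influenced vertices in time $T$ with space $O(m\log^{O(1)}m)$ (here $m=2n$, so $O(n\log^{O(1)}n)$) answers dominance queries in time $T+O(1)$; P{\v a}tra{\c s}cu's bound then forces $T=\Omega(\log n/\log\log n)=\Omega(\log m/\log\log m)$, as claimed.

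I expect the main obstacle to be the bookkeeping rather than any deep idea: one must verify that the only query-dependent two-dimensional signal survives after separating out the root and the middle vertices, i.e.\ that the $a_i$ contribution is a pure halfspace count and not a second dominance term, and that the monotonicity-of-path requirement neither spuriously influences extra vertices nor suppresses intended ones for any window. Getting the index offsets right --- so that the $2n$ events are distinct and the intended path is index-increasing for every $i$ --- is the small technical point on which correctness hinges.
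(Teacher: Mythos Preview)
Your proposal is correct and follows the same reduction as the paper: the height-two tree gadget with the root as the sole influential vertex, edges $e_i$ and $f_i$ placed so that the leaf $b_i$ is influenced exactly when both edges lie in the slice, and then reading off the dominance count from the influenced-vertex count after subtracting the contribution of the middle layer. The paper states the recovery formula tersely as $c-X$ without separating out the root or tracking the $+1$ offset; your more explicit decomposition into $c_0+(X+1)+D$ and your correction of the index placement to $n-i-1$ to avoid a collision at time $n$ are precisely the bookkeeping refinements one would want, and they do not change the argument's substance.
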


\begin{proof}
We use the translation described above. The answer to a dominance counting query for the query point $(X,Y)$ is given by $c-X$, where $c$ is the number of influenced vertices in the slice $G_{n-X-1,n+Y}$.
\end{proof}

\end{document}